\newtheorem{thm}{Theorem}
\newtheorem{lem}{Lemma}
\begin{document}

\sloppy

\title{Dynamic Interference Management}
\author{\IEEEauthorblockN{Aly El Gamal and Venugopal V.~Veeravalli}
 \IEEEauthorblockA{ECE Department and Coordinated Science Laboratory\\University of Illinois at Urbana-Champaign\\ Email: \{elgamal1,vvv\}@illinois.edu}}

\maketitle

\begin{abstract}
A linear interference network is considered. Long-term fluctuations (shadow fading) in the wireless channel can lead to any link being erased with probability $p$. Each receiver is interested in one unique message that can be available at $M$ transmitters. In a cellular downlink scenario, the case where $M=1$ reflects the cell association problem, and the case where $M>1$ reflects the problem of setting up the backhaul links for Coordinated Multi-Point (CoMP) transmission. In both cases, we analyze Degrees of Freedom (DoF) optimal schemes for the case of no erasures, and propose new schemes with better average DoF performance at high probabilities of erasure. For $M=1$, we characterize the average per user DoF, and identify the optimal assignment of messages to transmitters at each value of $p$. For general values of $M$, we show that there is no strategy for assigning messages to transmitters in large networks that is optimal for all values of $p$. 
\end{abstract}

\section{Introduction}

In~\cite{Mceliece-Stark-IT84}, the authors analyzed the average capacity for a point-to-point channel model where slow changes result in varying severity of noise. In this work, we apply a similar concept to interference networks by assuming that slowly changing deep fading conditions result in link erasures. We consider the linear interference network introduced by Wyner~\cite{Wyner}, with the consideration of two fading effects. Long-term fluctuations that result in link erasures over a complete block of time slots, and short-term fluctuations that allow us to assume that any specific joint realization for the non-zero channel coefficients, will take place with zero probability. We study the problem of associating receivers with transmitters and setting up the backhaul links for Coordinated Multi-Point (CoMP) transmission, in order to achieve the optimal average Degrees of Freedom (DoF). This problem was studied in~\cite{ElGamal-Annapureddy-Veeravalli-arXiv12} for the case of no erasures. Here, we extend the schemes in~\cite{ElGamal-Annapureddy-Veeravalli-arXiv12} to consider the occurrence of link erasures, and propose new schemes that lead to achieving better average DoF at high probabilities of erasure.

\section{System Model and Notation}\label{sec:systemmodel}
We use the standard model for the $K-$user interference channel with single-antenna transmitters and receivers,
\begin{equation}
Y_i(t) = \sum_{j=1}^{K} H_{i,j}(t) X_j(t) + Z_i(t),
\end{equation}
where $t$ is the time index, $X_j(t)$ is the transmitted signal of transmitter $j$, $Y_i(t)$ is the received signal at receiver $i$, $Z_i(t)$ is the zero mean unit variance Gaussian noise at receiver $i$, and $H_{i,j}(t)$ is the channel coefficient from transmitter $j$ to receiver $i$ over the time slot $t$. We remove the time index in the rest of the paper for brevity unless it is needed. For any set ${\cal A} \subseteq [K]$, we use the abbreviations $X_{\cal A}$, $Y_{\cal A}$, and $Z_{\cal A}$ to denote the sets $\left\{X_i, i\in {\cal A}\right\}$, $\left\{Y_i, i\in {\cal A}\right\}$, and $\left\{Z_i, i\in {\cal A}\right\}$, respectively. Finally, we use $[K]$ to denote the set $\{1,2,\ldots,K\}$.

\subsection{Channel Model}
Each transmitter can only be connected to its corresponding receiver as well as one following receiver, and the last transmitter can only be connected to its corresponding receiver. More precisely,

\begin{equation}\label{eq:channel}
H_{i,j} \text{ is identically } 0 \text { iff } i \notin \{j,j+1\},\forall i,j \in [K].
\end{equation}

In order to consider the effect of long-term fluctuations (shadowing), we assume that communication takes place over blocks of time slots, and let $p$ be the probability of block erasure. In each block, we assume that for each $j$, and each $i \in \{j,j+1\}$, $H_{i,j}=0$ with probability $p$. Moreover, short-term channel fluctuations allow us to assume that in each time slot, all non-zero channel coefficients are drawn independently from a continuous distribution. Finally, we assume that global channel state information is available at all transmitters and receivers. 
\subsection{Message Assignment}
For each $i \in [K]$, let $W_i$ be the message intended for receiver $i$, and ${\cal T}_i \subseteq [K]$ be the transmit set of receiver $i$, i.e., those transmitters with the knowledge of $W_i$. The transmitters in ${\cal T}_i$ cooperatively transmit the message $W_i$ to the receiver $i$. The messages $\{W_i\}$ are assumed to be independent of each other. The \emph{cooperation order} $M$ is defined to be the maximum transmit set size:
\begin{equation}\label{eq:coop_order}
M = \max_i |{\cal T}_i|.
\end{equation}

\subsection{Message Assignment Strategy}\label{sec:strategy}

A message assignment strategy is defined by a sequence of transmit sets $({\cal T}_{i,K}), i\in[K], K\in\{1,2,\ldots\}$. For each positive integer $K$ and $\forall i\in[K]$,  ${\cal T}_{i,K} \subseteq [K], |{\cal T}_{i,K}| \leq M$. We use message assignment strategies to define the transmit sets for a sequence of $K-$user channels. The $k^{\mathrm{th}}$ channel in the sequence has $k$ users, and the transmit sets for this channel are defined as follows. The transmit set of receiver $i$ in the $k^{\mathrm{th}}$ channel in the sequence is the transmit set ${\cal T}_{i,k}$ of the message assignment strategy. 
\subsection{Degrees of Freedom}
The average power constraint at each transmitter is $P$. In each block of time slots, the rates $R_i(P)$ are achievable if the decoding error probabilities of all messages can be simultaneously made arbitrarily small as the block length goes to infinity, and this holds for almost all realizations of non-zero channel coefficients. The sum capacity $\mathcal{C}_{\Sigma}(P)$ is the maximum value of the sum of the achievable rates. The total number of degrees of freedom ($\eta$) is defined as $\limsup_{P \rightarrow \infty}\frac{ C_{\Sigma}(P)}{\log P}$. For a probability of block erasure $p$, we let $\eta_p$ be the average value of $\eta$ over possible choices of non-zero channel coefficients.

For a $K$-user channel, we define $\eta_p(K,M)$ as the best achievable $\eta_p$ over all choices of transmit sets satisfying the cooperation order constraint in \eqref{eq:coop_order}. In order to simplify our analysis, we define the asymptotic average per user DoF $\tau_p(M)$ to measure how $\eta_p(K,M)$ scale with $K$,
\begin{equation}
\tau_p(M) = \lim_{K\rightarrow \infty} \frac{\eta_p(K,M)}{K}.
\end{equation}

We call a message assignment strategy \emph{optimal} for a given erasure probability $p$,  if there exists a sequence of coding schemes achieving $\tau_p(M)$ using the transmit sets defined by the message assignment strategy. A message assignment strategy is $\emph{universally optimal}$ if it is optimal for all values of $p$.

\section{Cell Association}
We first consider the case where each receiver can be served by only one transmitter. This reflects the problem of associating mobile users with cells in a cellular downlink scenario. We start by discussing orthogonal schemes (TDMA-based) for this problem, and then show that the proposed schemes are optimal. It will be useful in the rest of this section to view each realization of the network where some links are erased, as a series of subnetworks that do not interfere. We say that a set of $k$ users with successive indices $\{i,i+1,\ldots,i+k-1\}$ form a subnetwork if the following two conditions hold: The first condition is that $i=1$ or it is the case that message $W_{i-1}$ does not cause interference at $Y_i$, either because the direct link between the transmitter carrying $W_{i-1}$ and receiver $(i-1)$ is erased, or the transmitter carrying $W_{i-1}$ is not connected to the $i^{th}$ receiver. Secondly, $i+k-1=K$ or it is the case that message $W_{i+k-1}$ does not cause interference at $Y_{i+k}$, because the carrying transmitter is not connected to one of the receivers $(i+k-1)$ and $(i+k)$.

We say that the subnetwork is $\emph{atomic}$ if the transmitters carrying messages for users in the subnetwork have successive indices and for any transmitter $t$ carrying a message for a user in the subnetwork, and receiver $r$ such that $r \in \{t,t+1\}$ and $r\in\{i,i+1,\ldots,i+k-1\}$, the channel coefficient $H_{r,t} \neq 0$. 

For $i\in[K]$, let $N_i$ be the number of messages available at the $i^{th}$ transmitter, and let ${\bf N}^K=\left(N_1,N_2,\ldots,N_K\right)$. It is clear that the sequence ${\bf N}^K$ can be obtained from the transmit sets ${\cal T}_i, i\in[K]$; it is also true, as stated in the following lemma, that the converse holds. We borrow the notion of \emph{irreducible} message assignments from~\cite{ElGamal-Annapureddy-Veeravalli-arXiv12}. For $M=1$, an irreducible message assignment will have each message assigned to one of the two transmitters connected to its designated receiver.
\begin{lem}\label{lem:equiv}
For any irreducible message assignment where each message is assigned to exactly one transmitter, i.e., $|{\cal T}_i|=1, \forall i\in[K]$, the transmit sets ${\cal T}_i$, $i\in[K],$ are uniquely characterized by the sequence ${\bf N}^K$.
\end{lem}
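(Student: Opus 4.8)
The plan is to show that the map sending an irreducible single-transmitter message assignment to its sequence ${\bf N}^K$ is injective, by exhibiting an explicit left inverse: a recursion that reconstructs each ${\cal T}_i$ from ${\bf N}^K$ alone.

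First I would isolate the structural constraints coming from \eqref{eq:channel}: receiver $i$ is connected only to transmitters $i-1$ and $i$, transmitter $j$ (for $j<K$) is connected only to receivers $j$ and $j+1$, and transmitter $K$ is connected only to receiver $K$. Hence, in an irreducible assignment with $|{\cal T}_i|=1$ for all $i$: (a) $W_1$ is necessarily assigned to transmitter $1$; (b) for $i\ge 2$, $W_i$ is assigned to exactly one of transmitters $i-1$ and $i$; and (c) the only messages transmitter $j$ can carry are $W_j$ and $W_{j+1}$ (with $W_{K+1}$ absent). I would encode this with a binary variable $a_i$ equal to $1$ if $W_i$ is assigned to transmitter $i$ and $0$ if $W_i$ is assigned to transmitter $i-1$, with $a_1=1$ forced by (a). Then (c) gives, by counting the messages present at transmitter $j$, the relation $N_j = a_j + (1-a_{j+1})$ for $1\le j\le K-1$ and $N_K = a_K$.

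Next I would solve this system. Starting from $a_1=1$, the relation rearranges to $a_{j+1} = a_j + 1 - N_j$, which determines $a_2, a_3, \ldots, a_K$ successively from ${\bf N}^K$. Since $\{a_i\}$ determines every transmit set (namely ${\cal T}_i = \{i\}$ if $a_i = 1$ and ${\cal T}_i = \{i-1\}$ if $a_i = 0$), two irreducible assignments with the same ${\bf N}^K$ must coincide, which is the claim. The same argument can be phrased directly as an induction on $i$: once it is known where $W_i$ is located, the value $N_i$ pins down where $W_{i+1}$ is located.

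I do not expect a substantial obstacle here. The only care needed is in the boundary bookkeeping — handling transmitter $1$ / message $W_1$ and the last transmitter correctly — and in justifying step (b), i.e., that $|{\cal T}_i|=1$ together with irreducibility really forces each $W_i$ with $i\ge 2$ onto exactly one of the two transmitters adjacent to receiver $i$, so that $a_i$ is well-defined and binary. Consistency of the reconstructed values $a_i\in\{0,1\}$ is automatic, since ${\bf N}^K$ is assumed to arise from a genuine irreducible message assignment.
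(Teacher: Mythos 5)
Your argument is correct, and the reconstruction it gives is sound: with $a_i$ indicating whether $W_i$ sits at transmitter $i$ or $i-1$, the identity $N_j = a_j + (1-a_{j+1})$ (for $j<K$) together with the forced initial condition $a_1=1$ yields the recursion $a_{j+1}=a_j+1-N_j$, which pins down every transmit set from ${\bf N}^K$ alone; injectivity follows. The one hypothesis you flag — that irreducibility plus $|{\cal T}_i|=1$ forces $W_i$ onto one of transmitters $i-1$, $i$ — is exactly how the paper uses the term irreducible for $M=1$, so nothing is missing there. Your route differs from the paper's in the mechanics of the reconstruction: the paper does not set up a recursion but instead locates the first inactive transmitter $x=\min\{i:N_i=0\}$, argues by counting that there is a unique index $y<x$ with $N_y=2$, reads off the assignment on the segment $[x]$ from the pair $(y,x)$, and then decomposes the whole network into such segments, treating separately the case where the last segment ends at transmitter $K$ with $N_K=1$. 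Your left-to-right recursion handles all of these cases uniformly and avoids the segment-boundary case analysis entirely, at the mild cost of introducing the auxiliary encoding $a_i$; the paper's segment decomposition, on the other hand, makes explicit the subnetwork structure (delimited by inactive transmitters) that is reused repeatedly in the later DoF arguments, which is presumably why the authors present it that way.
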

\begin{proof}
Since each message can only be available at one transmitter, then this transmitter has to be connected to the designated receiver. More precisely, ${\cal T}_i \subset \{i-1,i\}, \forall i\in\{2,\ldots,K\}$, and ${\cal T}_1 = \{1\}$. It follows that each transmitter carries at most two messages and the first transmitter carries at least the message $W_1$, i.e., $N_i \in \{0,1,2\}, \forall i\in\{2,\ldots,K\}$, and $N_1 \in \{1,2\}$. Assume that $N_i=1, \forall i\in[K]$, then ${\cal T}_i=\{i\},\forall i\in[K]$. For the remaining case, we know that there exists $i \in \{2,\ldots,K\}$ such that $N_i=0$, since $\sum_{i=1}^{K} N_i = K$; we handle this case in the rest of the proof.

Let $x$ be the smallest index of a transmitter that carries no messages, i.e., $x = \min \{i: N_i=0\}$. We now show how to reconstruct the transmit sets ${\cal T}_i, i\in\{1,\ldots,x\}$ from the sequence $(N_1,N_2,\ldots,N_x)$. We note that ${\cal T}_i \in [x], \forall i\in[x]$, and since $N_x=0$, it follows that ${\cal T}_i \notin[x], \forall i\notin [x]$. It follows that $\sum_{i=1}^{x-1} N_i =x$. Since ${\cal T}_i \subset \{i-1,i\},\forall i\in\{2,\ldots,x\}$, we know that at most one transmitter in the first $x-1$ transmitters carries two messages. Since $\sum_{i=1}^{x-1} N_i =x$, and $N_i \in \{1,2\}, \forall i\in[x-1]$, it follows that there exists an index $y\in[x-1]$ such that $N_y = 2$, and $N_i=1, \forall i\in[x-1]\backslash\{y\}$. It is now clear that the $y^{th}$ transmitter carries messages $W_y$ and $W_{y+1}$, and each transmitter with an index $j\in\{y+1,\ldots,x-1\}$ is carrying message $W_{j+1}$, and each transmitter with an index $j \in \{1,\ldots,y\}$ is carrying message $W_j$. The transmit sets are then determined as follows. ${\cal T}_i=\{i\},\forall i\in[y]$ and ${\cal T}_i=\{i-1\},\forall i\in\{y+1,\ldots,x\}$.

We view the network as a series of subnetworks, where the last transmitter in each subnetwork is either inactive or the last transmitter in the network. If the last transmitter in a subnetwork is inactive, then the transmit sets in the subnetwork are determined in a similar fashion to the transmit sets ${\cal T}_i, i\in[x]$, in the above scenario. If the last transmitter in the subnetwork is the $K^{th}$ transmitter, and $N_K = 1$, then each message in this subnetwork is available at the transmitter with the same index.
\end{proof}

We use Lemma~\ref{lem:equiv} to describe message assignment strategies for large networks through repeating patterns of short ternary strings. Given a ternary string ${\bf S}=(S_1,\ldots,S_n)$ of fixed length $n$ such that $\sum_{i=1}^{n} S_i = n$, we define ${\bf N}^K$, $K \geq n$ as follows:
\begin{itemize}
\item $N_i=S_{i \text{ mod } n}$ if $\quad i\in\left\{1,\ldots,n\left\lfloor \frac{K}{n} \right\rfloor\right\}$,
\item $N_i=1$ if $i\in\left\{n\left\lfloor \frac{K}{n} \right\rfloor+1,\ldots,K\right\}$.
\end{itemize}

We now evaluate all possible message assignment strategies satisfying the cell association constraint using ternary strings through the above representation. We only restrict our attention to irreducible message assignments, and note that if there are two transmitters with indices $i$ and $j$ such that $i < j$ and each is carrying two messages, then there is a third transmitter with index $k$ such that $i < k < j$ that carries no messages. It follows that any string defining message assignment strategies that satisfy the cell association constraint, has to have one of the following forms:
\begin{itemize}
\item $S^{(1)}=(1)$,
\item $S^{(2)}=(2,1,1,\ldots,1,0)$,
\item $S^{(3)}=(1,1,\ldots,1,2,0)$,
\item $S^{(4)}=(1,1,\ldots,1,2,1,1,\ldots,1,0)$.
\end{itemize} 
We now introduce the three candidate message assignment strategies illustrated in Figure~\ref{fig:msgassignment}, and characterize the TDMA per user DoF achieved through each of them; we will show later that the optimal message assignment strategy at any value of $p$ is given by one of the three introduced strategies. We first consider the message assignment strategy defined by the string having the form $S^{(1)}=(1)$. Here, each message is available at the transmitter having the same index.

\begin{figure}
  \centering
\subfloat[]{\label{fig:highp}\includegraphics[height=0.1\textwidth]{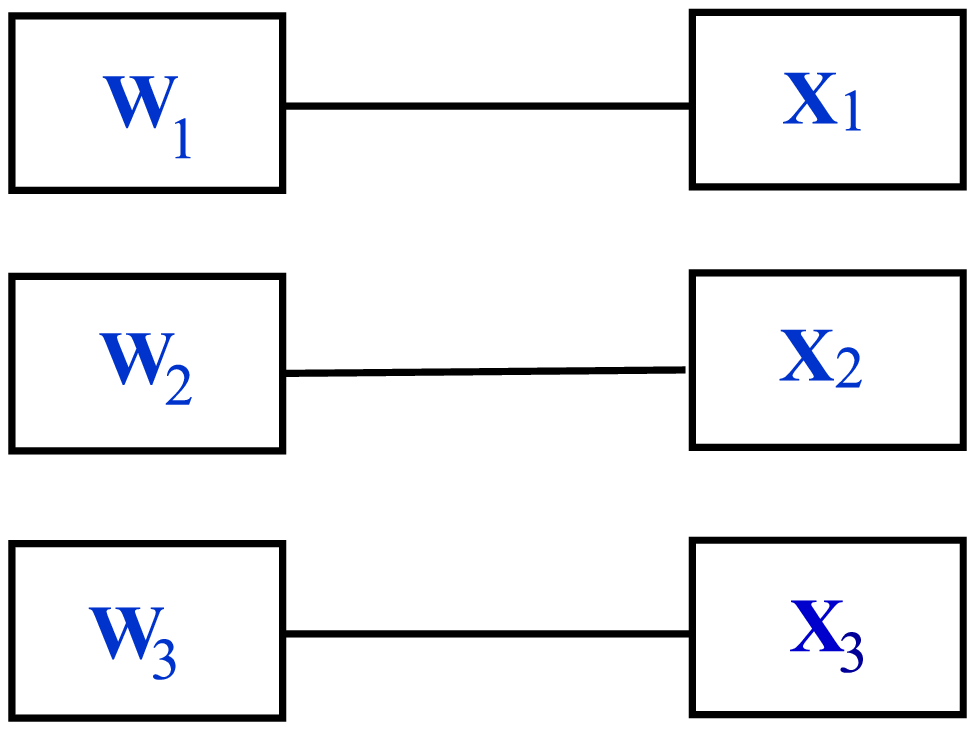}}                
\quad\quad\subfloat[]{\label{fig:lowp}\includegraphics[width=0.13\textwidth]{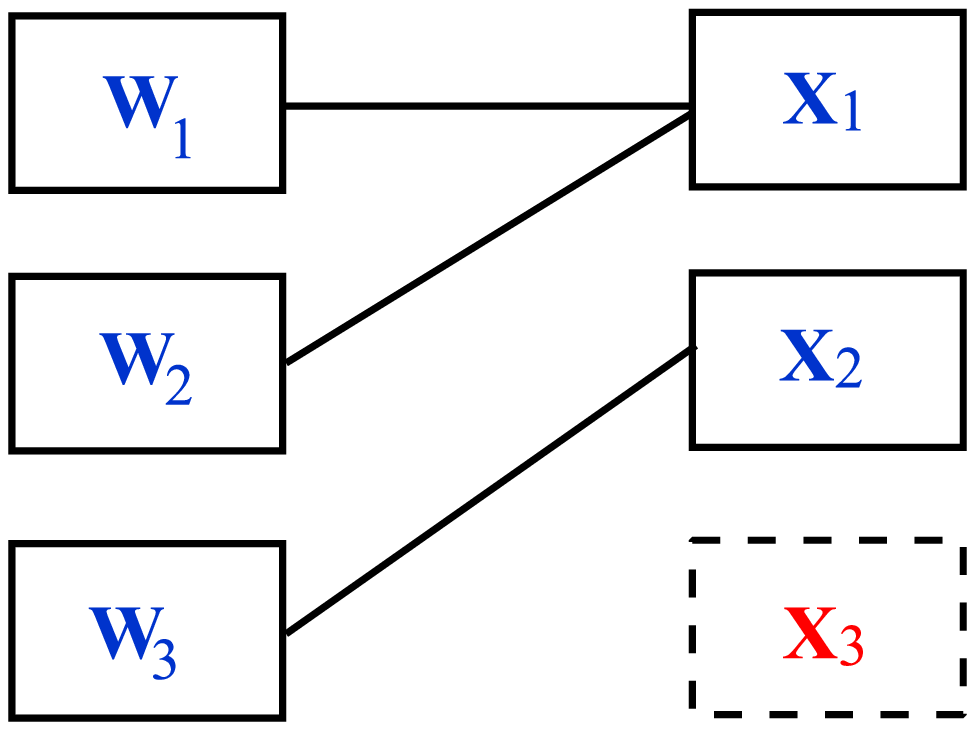}}
\quad\quad\subfloat[]{\label{fig:middlep}\includegraphics[width=0.13\textwidth]{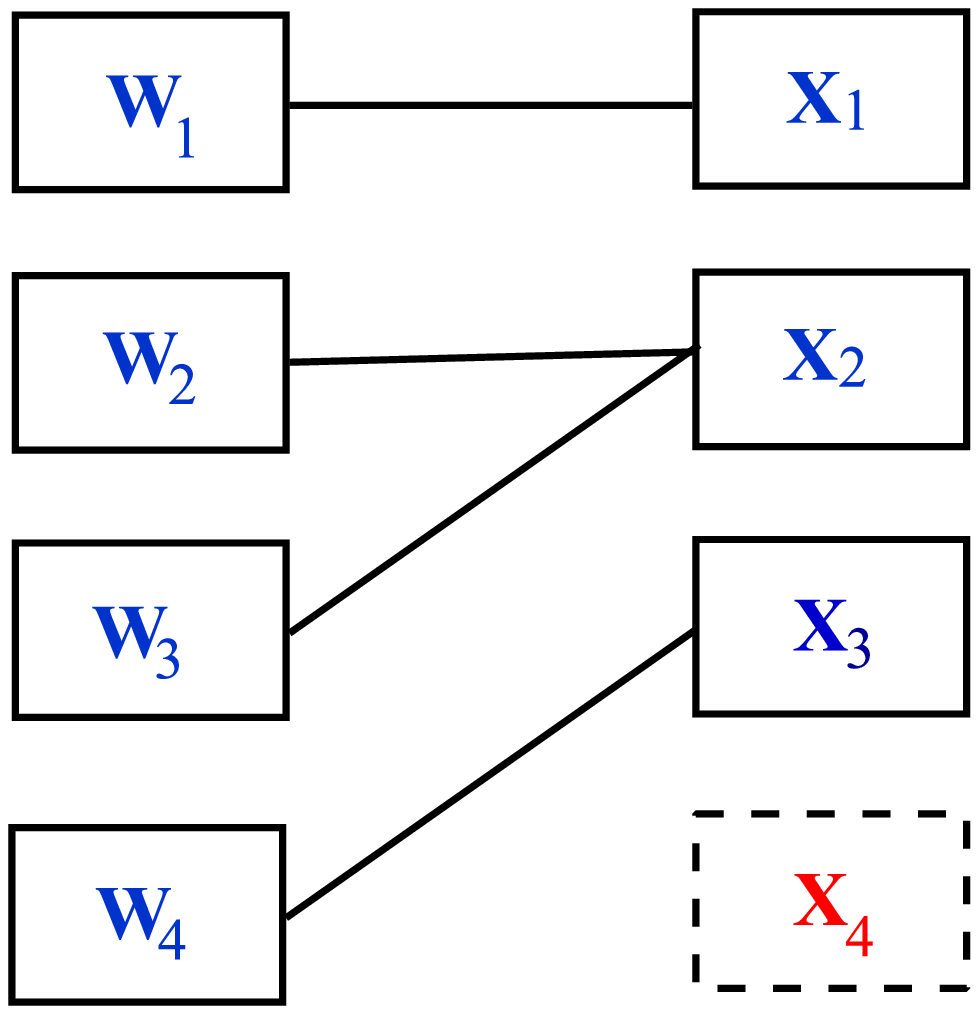}}
  \caption{The optimal message assignment strategies for the cell association problem. The red dashed boxes represent transmit signals that are inactive in all network realizations. The strategies in $(a)$, $(b)$, and $(c)$ are optimal at high, low, and middle values of the erasure probability $p$, respectively.}
  \label{fig:msgassignment}
\end{figure}
\begin{lem}\label{lem:highp}
Under the restriction to the message assignment strategy ${\cal T}_{i,K}=\{i\}, \forall K \in {\bf Z}^+, i\in[K],$ and orthogonal TDMA schemes, the average per user DoF is given by,
\begin{eqnarray}\label{eq:highp}
\tau_p^{(1)} &=& \frac{1}{2}\left(1-p+(1-p)\left(1-(1-p)^2\right)^2\right)\nonumber\\&&+\sum_{i=1}^{\infty} \frac{1}{2}\left(1-(1-p)^2\right)^2 (1-p)^{4i+1}.
\end{eqnarray}
\end{lem}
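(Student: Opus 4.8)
The plan is to reduce the optimal TDMA rate to a maximum-independent-set count on the conflict graph of each channel realization, and then to average over erasures by a linearity/renewal argument; write $u=1-p$.

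First I would set up the conflict structure. Under ${\cal T}_{i,K}=\{i\}$, the message $W_i$ lives only at transmitter $i$, which is connected to receiver $i$ through the direct link $H_{i,i}$ and to receiver $i+1$ through the cross link $H_{i+1,i}$. A transmitter whose direct link is erased can never deliver its (unique) message and can only create interference, so in an optimal TDMA scheme it is kept silent; hence only users $i$ with $H_{i,i}\neq0$ (call them \emph{live}) can contribute to $\eta$. Two live users $i$ and $i+1$ cannot be served in the same slot precisely when $H_{i+1,i}\neq0$, and since $W_i$ reaches no receiver past $i+1$ these are the only conflicts. Thus the conflict graph on the live users is a disjoint union of paths, and a path with $\ell$ vertices is exactly an atomic subnetwork: a maximal run $\{s,\dots,s+\ell-1\}$ of consecutive live users with $H_{s+j+1,s+j}\neq0$ for $0\le j\le\ell-2$. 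The best TDMA can do over such a path is a maximum independent set of size $\lceil\ell/2\rceil$, achieved already in a single slot by activating the odd-indexed users of every atomic subnetwork (distinct atomic subnetworks, separated by a dead user or an erased cross link, do not interfere). Hence $\eta_p(K)=\sum_{\text{atomic subnetworks}}\lceil\ell/2\rceil$ in every realization.

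Next I would compute the relevant probabilities. All erasure events are independent, so for a user $i$ away from the ends of $[K]$, the atomic subnetwork $\{s,\dots,s+\ell-1\}$ containing $i$ requires $\ell$ direct links and $\ell-1$ cross links present, plus the two bordering connections broken; each bordering connection involves a fresh direct/cross pair and is broken with probability $1-u^2$. This yields $\Pr[i\text{ occupies a fixed one of the }\ell\text{ positions of a length-}\ell\text{ atomic subnetwork}]=u^{2\ell-1}(1-u^2)^2$, independent of the position. Summing over the $\lceil\ell/2\rceil$ odd positions and over $\ell$, the canonical scheme serves user $i$ with probability $\sum_{\ell\ge1}\lceil\ell/2\rceil\,u^{2\ell-1}(1-u^2)^2$; by linearity of expectation, and since only $o(K)$ users sit close enough to an end of $[K]$ to spoil this estimate (the finitely many boundary subnetworks contribute $O(1)$ to $\eta_p(K)$ by the geometric tail), $\tau_p^{(1)}=\sum_{\ell\ge1}\lceil\ell/2\rceil\,u^{2\ell-1}(1-u^2)^2$. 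Writing $\lceil\ell/2\rceil=\tfrac{\ell}{2}+\tfrac12\mathbf{1}\{\ell\text{ odd}\}$ and using $\sum_{\ell\ge1}\ell u^{2\ell-1}=u/(1-u^2)^2$,
\[
\tau_p^{(1)}=\frac{u}{2}+\frac{(1-u^2)^2}{2}\sum_{k\ge0}u^{4k+1}=\frac{u}{2}+\frac{(1-u^2)^2u}{2}+\sum_{i\ge1}\frac{(1-u^2)^2}{2}\,u^{4i+1},
\]
which is \eqref{eq:highp} after substituting $u=1-p$ (and collapses to the compact form $\tau_p^{(1)}=(1-p)/\bigl(1+(1-p)^2\bigr)$, equal to $1/2$ at $p=0$, as it must).

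The main obstacle is the first paragraph: carefully justifying that silencing dead transmitters is without loss for TDMA, that the achievable TDMA sum-DoF of a realization equals the sum over atomic subnetworks of $\lceil\ell/2\rceil$, and that erased cross links and dead users both act as separators delimiting these atomic subnetworks. Once this is in place, the probabilistic part is routine: it only requires checking that ``the $\ell$ direct and $\ell-1$ internal cross links are present'' and ``the two bordering connections are broken'' involve disjoint, hence independent, sets of channel coefficients, and then summing a geometric series.
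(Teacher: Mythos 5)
Your proof is correct and reaches \eqref{eq:highp} by a genuinely different organization of the same underlying combinatorics. The paper builds the expression term by term from an explicit priority scheme: odd-indexed users transmit whenever their direct link survives, even-indexed users transmit opportunistically when they neither suffer nor cause interference, and a priority swap inside odd-length atomic subnetworks whose first and last users have even indices produces the infinite series; optimality within the TDMA class is then delegated to~\cite[Theorem~$1$]{Maleki-Jafar-arXiv13}. You instead identify the TDMA optimum of each realization as the maximum independent set of the conflict graph on live users, observe that this graph is a disjoint union of paths, and compute the expected total of $\lceil\ell/2\rceil$ over components from the component-size distribution by linearity of expectation; the single sum $\sum_{\ell\ge1}\lceil\ell/2\rceil u^{2\ell-1}(1-u^2)^2$ then splits, via $\lceil\ell/2\rceil=\tfrac{\ell}{2}+\tfrac12\mathbf{1}\{\ell\text{ odd}\}$, into exactly the paper's three terms (the $\tfrac{\ell}{2}$ part gives $\tfrac12(1-p)$, the odd-$\ell$ part at $\ell=1$ gives the even-user term, and the odd $\ell\ge3$ terms give the series). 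What your route buys: the converse within the TDMA class is self-contained (a linear functional over the independent-set polytope is maximized at an independent set, so the external citation is not needed), the disjointness of the link sets underlying the probability computation is made explicit, and you obtain the closed form $\tau_p^{(1)}=(1-p)/(1+(1-p)^2)$, which the paper never states and which checks numerically against \eqref{eq:highp}. One small caveat: your path components are not literally the paper's ``atomic subnetworks'' --- your right-boundary event (next user dead or next cross link erased, probability $1-(1-p)^2$) differs from the boundary forced by the paper's definition, under which an atomic subnetwork's outgoing cross link must itself be erased --- so you should present the decomposition as your own rather than claim the identification; this is purely terminological and does not affect the validity of the count.
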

\begin{proof}
We will first explain a transmission scheme where $\frac{1}{2}\left(1-p+(1-p)\left(1-(1-p)^2\right)^2\right)$ DoF is achieved, and then modify it to show how to achieve $\tau_p^{(1)}$. For each user with and odd index $i$, message $W_i$ is transmitted whenever the channel coefficient $H_{i,i} \neq 0$; the rate achieved by these users contributes to the average per user DoF by $\frac{1}{2}(1-p)$. For each user with an even index $i$, message $W_i$ is transmitted whenever the following holds: $H_{i,i}\neq0$, $W_{i-1}$ does not cause interference at $Y_i$, and the transmission of $W_i$ will not disrupt the communication of $W_{i+1}$ to its designated receiver; we note that this happens if and only if $H_{i,i} \neq 0 \text { and } \left(H_{i-1,i-1}=0 \text{ or } H_{i,i-1}=0\right)$$\text{ and }$
$(H_{i+1,i}=0 \text{ or } H_{i+1,i+1}=0)$. It follows that the rate achieved by users with even indices contributes to the average per user DoF by $\frac{1}{2} (1-p)\left(1-(1-p)^2\right)^2$.

 We now show a modification of the above scheme to achieve $\tau_p^{(1)}$. As above, users with odd indices have priority, i.e., their messages are delivered whenever their direct links exist, and users with even indices deliver their messages whenever their direct links exist and the channel connectivity allows for avoiding conflict with priority users. However, we make an exception to the priority setting in atomic subnetworks consisting of an odd number of users, and the first and last users have even indices; in these subnetworks, one extra DoF is achieved by allowing users with even indices to have priority and deliver their messages. The resulting extra term in the average per user DoF is calculated as follows. Fixing a user with an even index, the probability that this user is the first user in a subnetwork consisting of an odd number of users in a large network is $\sum_{i=1}^{\infty}\left(1-(1-p)^2\right)^2 (1-p)^{4i+1}$; for each of these events, the sum DoF is increased by $1$, and hence the added term to the average per user DoF is equal to half this value, since every other user has an even index.

The optimality of the above scheme within the class of orthogonal TDMA-based schemes follows directly from~\cite[Theorem $1$]{Maleki-Jafar-arXiv13} for each realization of the network.
\end{proof}

We will show later that the above scheme is optimal at high erasure probabilities. In~\cite{ElGamal-Annapureddy-Veeravalli-arXiv12}, the optimal message assignment for the case of no erasures was characterized. The per user DoF was shown to be $\frac{2}{3}$, and was achieved by deactivating every third transmitter and achieving $1$ DoF for each transmitted message. We now consider the extension of this message assignment illustrated in Figure~\ref{fig:lowp}, which will be shown later to be optimal for low erasure probabilities. 
\begin{lem}\label{lem:lowp}
Under the restriction to the message assignment strategy defined by the string $S=(2,1,0)$, and orthogonal TDMA schemes, the average per user DoF is given by,
\begin{eqnarray}\label{eq:lowp}
\tau_p^{(2)} &=& \frac{2}{3}\left(1-p\right)+\frac{1}{3}p\left(1-p\right)\left(1-\left(1-p\right)^2 \right).
\end{eqnarray}
\end{lem}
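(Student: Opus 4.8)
The plan is to read off the message assignment specified by $S=(2,1,0)$ and then reduce the whole network to independent three-user blocks. By Lemma~\ref{lem:equiv}, the string $S=(2,1,0)$ fixes the assignment in which, for every $k\ge 0$, transmitter $3k+1$ carries $W_{3k+1}$ and $W_{3k+2}$, transmitter $3k+2$ carries $W_{3k+3}$, and transmitter $3k+3$ is switched off in \emph{every} realization. The key structural observation is that, because transmitter $3k+3$ is always inactive, receiver $3k+1$ hears only transmitter $3k+1$, and no message of the triple $\{3k+1,3k+2,3k+3\}$ ever reaches receiver $3k+4$; hence, for every erasure pattern, the network decomposes into non-interfering blocks of three consecutive users. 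Up to an $o(K)$ contribution from the first block and from the all-ones tail of ${\bf N}^K$, this gives $\tau_p^{(2)}=\tfrac13\,\mathbb{E}[D]$, where $D\in\{0,1,2\}$ is the sum DoF that an optimal orthogonal TDMA scheme achieves on one block.

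Next I would solve the one-block problem. Put $q=1-p$ and let $a,b,c,d\in\{0,1\}$ indicate, respectively, that the links transmitter $3k+1\!\to\!$receiver $3k+1$, transmitter $3k+1\!\to\!$receiver $3k+2$, transmitter $3k+2\!\to\!$receiver $3k+2$, and transmitter $3k+2\!\to\!$receiver $3k+3$ survive; by the channel model these are i.i.d.\ $\mathrm{Bernoulli}(q)$, and $c$ is the only potential interference link inside the block. Since receivers $3k+1$ and $3k+3$ each hear a single active transmitter, $W_{3k+1}$ and $W_{3k+3}$ can be delivered interference-free whenever $a=1$ and $d=1$, respectively. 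The only couplings are that transmitter $3k+1$ has a single antenna, so it must choose between sending $W_{3k+1}$ and sending $W_{3k+2}$, and that $W_{3k+2}$ collides at receiver $3k+2$ with $W_{3k+3}$ when $c=1$. Enumerating the finitely many transmit modes of the two active transmitters, and checking that time-sharing among modes (and among blocks) cannot beat the best pure mode on an isolated block, I expect to obtain $D=\max\{\,a+d,\ b,\ (b\wedge\bar c)+d\,\}$, where $\wedge$ denotes the minimum.

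It then remains to evaluate $\mathbb{E}[D]=\Pr[D\ge 1]+\Pr[D\ge 2]$. We have $D=0$ exactly when the three ``useful'' links are all erased, i.e.\ $a=b=d=0$, so $\Pr[D\ge 1]=1-p^3$; and $D=2$ requires $d=1$ together with ($a=1$ or ($b=1$ and $c=0$)), giving $\Pr[D\ge 2]=q\bigl(1-p\,(1-qp)\bigr)=q^2(1+p^2)$. Adding these and simplifying gives $\mathbb{E}[D]=2-2p+2p^2-3p^3+p^4=2(1-p)+p(1-p)\bigl(1-(1-p)^2\bigr)$, whence $\tau_p^{(2)}=\tfrac13\mathbb{E}[D]$ is the stated expression. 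The per-realization optimality of the TDMA scheme used above — both that $D$ messages can in fact be delivered simultaneously under the generic-coefficient assumption, and that no orthogonal TDMA scheme does better in that realization — is inherited block-by-block from \cite[Theorem~$1$]{Maleki-Jafar-arXiv13}.

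I expect the middle step to be the only delicate part: one must be exhaustive about the per-block transmit modes and, in particular, argue that convex combinations of modes (TDMA across time, possibly coordinated across blocks) cannot exceed the blockwise maximum. The surrounding arguments are either the clean structural decomposition — which works here precisely because one transmitter in every triple is off in all realizations — or a short Bernoulli computation.
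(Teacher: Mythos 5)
Your proposal is correct and follows essentially the same route as the paper: both exploit the fact that every third transmitter is silent in all realizations to reduce the problem to independent three-user blocks, both use per-realization interference avoidance for achievability, and both appeal to \cite[Theorem~$1$]{Maleki-Jafar-arXiv13} for per-realization TDMA optimality. The only difference is bookkeeping --- you compute $\mathbb{E}[D]=\Pr[D\ge 1]+\Pr[D\ge 2]$ for the per-block maximum independent set, while the paper sums the contributions of an explicit priority rule over the three residue classes of user indices --- and your arithmetic reproduces \eqref{eq:lowp} exactly.
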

\begin{proof}
For each user with an index $i$ such that $\left(i \text{ mod } 3 = 0\right)$ or $\left(i \text{ mod } 3=1\right)$, message $W_i$ is transmitted whenever the link between the transmitter carrying $W_i$ and the $i^{th}$ receiver is not erased; these users contribute to the average per user DoF by a factor of $\frac{2}{3}\left(1-p\right)$. For each user with an index $i$ such that $\left(i \text{ mod } 3=2\right)$, message $W_i$ is transmitted through $X_{i-1}$ whenever the following holds: $H_{i,i-1} \neq 0$, message $W_{i-1}$ is not transmitted because $H_{i-1,i-1}=0$, and the transmission of $W_i$ will not be disrupted by the communication of $W_{i+1}$ through $X_i$ because $\left(H_{i,i}=0\right) \text{ or } \left(H_{i+1,i}=0\right)$; these users contribute to the average per user DoF by a factor of $\frac{1}{3}p\left(1-p\right)\left(1-\left(1-p\right)^2\right)$. Using the considered message assignment strategy, the TDMA optimality of this scheme follows from~\cite[Theorem $1$]{Maleki-Jafar-arXiv13} for each network realization.
\end{proof}

We now consider the message assignment strategy illustrated in Figure~\ref{fig:middlep}. We will show later that this strategy is optimal for a middle regime of erasure probabilities. 
\begin{lem}\label{lem:middlep}
Under the restriction to the message assignment strategy defined by the string $S=(1,2,1,0)$, and orthogonal TDMA schemes, the average per user DoF is given by,
\begin{eqnarray}\label{eq:middlep}
\tau_p^{(3)} &=& \frac{1}{2}\left(1-p\right)\nonumber\\&&+\frac{1}{4}\left(1-p\right)\left(1-\left(1-p\right)^2 \right)\left(1+p+\left(1-p\right)^3\right).\nonumber\\
\end{eqnarray}
\end{lem}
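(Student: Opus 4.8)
The plan is to mirror the proofs of Lemmas~\ref{lem:highp} and~\ref{lem:lowp}: construct an explicit orthogonal TDMA scheme, average the number of messages it delivers over the erasure realizations, and appeal to \cite[Theorem $1$]{Maleki-Jafar-arXiv13} for per-realization TDMA optimality of the fixed message assignment. First I would use Lemma~\ref{lem:equiv} to read off the transmit sets from $S=(1,2,1,0)$: on each period-$4$ block of users $\{4k+1,4k+2,4k+3,4k+4\}$ one has $\mathcal{T}_{4k+1}=\{4k+1\}$, $\mathcal{T}_{4k+2}=\mathcal{T}_{4k+3}=\{4k+2\}$, $\mathcal{T}_{4k+4}=\{4k+3\}$, and transmitter $4k+4$ is inactive in every realization. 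Because transmitter $4k+4$ is always silent, consecutive blocks do not interfere, so the DoF is additive over blocks up to a bounded-length tail of all-ones indices, and it suffices to analyze one block. Note that receiver $4k+1$ hears only transmitter $4k+1$ and receiver $4k+4$ hears only transmitter $4k+3$.

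I would designate $W_{4k+1}$ and $W_{4k+4}$ as \emph{priority} users: transmit $W_{4k+1}$ from transmitter $4k+1$ whenever $H_{4k+1,4k+1}\neq 0$ and $W_{4k+4}$ from transmitter $4k+3$ whenever $H_{4k+4,4k+3}\neq 0$. Since the relevant receivers see no other active transmitter, these transmissions are always interference-free and contribute $\frac{2}{4}(1-p)=\frac12(1-p)$ to the average per user DoF. In the time slots used by the priority users I would then try to append at most one of the two remaining messages of the block, both of which reside at transmitter $4k+2$: appending $W_{4k+2}$ (to receiver $4k+2$) is possible iff $H_{4k+2,4k+2}\neq 0$ and the only potential interferer, transmitter $4k+1$, is harmless, i.e. $H_{4k+1,4k+1}=0$ or $H_{4k+2,4k+1}=0$; appending $W_{4k+3}$ (to receiver $4k+3$) is possible iff $H_{4k+3,4k+2}\neq 0$ and the only potential interferer, transmitter $4k+3$, is harmless, i.e. $H_{4k+3,4k+3}=0$ or $H_{4k+4,4k+3}=0$. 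In either case one checks that the appended message neither disturbs nor is disturbed by the priority transmissions or by the neighbouring blocks.

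The two appendability events depend on disjoint sets of channel coefficients, hence are independent, and each has probability $q:=(1-p)\bigl(1-(1-p)^2\bigr)$. A block delivers one extra DoF exactly when at least one of the two messages is appendable, which occurs with probability $2q-q^2$, so the added per user term is $\tfrac14(2q-q^2)$. Using $q=(1-p)-(1-p)^3$ one gets $2q-q^2=q(2-q)=(1-p)\bigl(1-(1-p)^2\bigr)\bigl(1+p+(1-p)^3\bigr)$, and adding this to $\tfrac12(1-p)$ yields exactly \eqref{eq:middlep} in the limit $K\to\infty$. Finally, optimality of the scheme among orthogonal TDMA schemes for this message assignment follows realization by realization from \cite[Theorem $1$]{Maleki-Jafar-arXiv13}.

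The computation is routine; the part that needs care is the achievability bookkeeping — in particular, arguing that at most one DoF per block can be extracted from the pair $\{W_{4k+2},W_{4k+3}\}$ (so that time-sharing transmitter $4k+2$ between them is never beneficial), and checking that the inactive transmitter really does sever all inter-block interference so that the per-block analysis is exact. One must also make sure the interference-avoidance conditions imposed above coincide with those that \cite[Theorem $1$]{Maleki-Jafar-arXiv13} certifies as TDMA-optimal for each realized subnetwork.
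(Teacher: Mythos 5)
Your proposal is correct and arrives at exactly \eqref{eq:middlep}. The overall template is the same as the paper's --- a per-block analysis justified by the always-silent transmitter $4k+4$, an explicit orthogonal scheme, and per-realization optimality via \cite[Theorem 1]{Maleki-Jafar-arXiv13} --- but the scheme inside each block is genuinely different and, arguably, cleaner. The paper gives priority to the odd-indexed messages $W_{4k+1},W_{4k+3}$, opportunistically adds $W_{4k+2}$ and $W_{4k+4}$, and then needs a separate ``modification'' step (switching priority when users $\{4k+2,4k+3,4k+4\}$ form an atomic subnetwork) to pick up a final term $\frac{1}{4}\left(1-\left(1-p\right)^2\right)\left(1-p\right)^4$; the three resulting contributions assemble into the factor $\left(1+p+\left(1-p\right)^3\right)$ as $p+1+(1-p)^3$. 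You instead prioritize $W_{4k+1}$ and $W_{4k+4}$, whose receivers see no in-block interferer, and append at most one of the two messages stored at transmitter $4k+2$; the two appendability events are indeed independent with common probability $q=(1-p)\left(1-(1-p)^2\right)$, and $2q-q^2=q(2-q)$ reproduces the same factor since $2-q=1+p+(1-p)^3$. This bypasses the paper's two-stage priority-switching argument entirely. The concern you flag about per-realization optimality is the right one, and it checks out: $W_{4k+2}$ and $W_{4k+3}$ share a single-antenna transmitter (so the pair contributes at most one DoF), $W_{4k+1}$ and $W_{4k+4}$ never conflict with each other or with anything outside the block, and hence the per-realization TDMA optimum equals the number of deliverable priority messages plus one exactly when some message at transmitter $4k+2$ is deliverable without conflicting with the deliverable priority messages --- which is precisely what your scheme attains, matching the paper's converse claim.
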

\begin{proof}
As in the proof of Lemma~\ref{lem:highp}, we first explain a transmission scheme achieving part of the desired rate, and then modify it to show how the extra term can be achieved. Let each message with an odd index be delivered whenever the link between the transmitter carrying the message and the designated receiver is not erased; these users contribute to the average per user DoF by a factor of $\frac{1}{2} \left(1-p\right)$. For each user with an even index $i$, if $i \text{ mod } 4=2$, then $W_i$ is transmitted through $X_i$ whenever the following holds: $H_{i,i} \neq 0$, message $W_{i+1}$ is not transmitted through $X_i$ because $H_{i+1,i} =0$, and the transmission of $W_i$ can will not be disrupted by the communication of $W_{i-1}$ through $X_{i-1}$ because either $H_{i,i-1}=0$ or $H_{i-1,i-1} =0$; these users contribute to the average per user DoF by a factor of $\frac{1}{4}p\left(1-p\right)\left(1-\left(1-p\right)^2\right)$. For each user with an even index $i$ such that $i$ is a multiple of $4$, $W_i$ is transmitted through $X_{i-1}$ whenever $H_{i,i-1}\neq 0$, and the transmission of $W_i$ will not disrupt the communication of $W_{i-1}$ through $X_{i-2}$ because either $H_{i-1,i-1}=0$ or $H_{i-1,i-2}=0$; these users contribute to the average per user DoF by a factor of $\frac{1}{4}\left(1-p\right)\left(1-\left(1-p\right)^2\right)$. 

We now modify the above scheme to show how $\tau_p^{(3)}$ can be achieved. Since the $i^{th}$ transmitter is inactive for every $i$ that is a multiple of $4$, users $\{i-3,i-2,i-1,i\}$ are separated from the rest of the network for every $i$ that is a multiple of $4$, i.e., these users form a subnetwork. We explain the modification for the first four users, and it will be clear how to apply a similar modification for every following set of four users. Consider the event where message $W_1$ does not cause interference at $Y_2$, because either $H_{1,1}=0$ or $H_{2,1}=0$, and it is the case that $H_{2,2}\neq 0$, $H_{3,2} \neq 0$, $H_{3,3} \neq 0$, and $H_{4,3} \neq 0$; this is the event that users $\{2,3,4\}$ form an atomic subnetwork, and it happens with probability $\left(1-\left(1-p\right)^2\right)\left(1-p\right)^4$. In this case, we let messages $W_2$ and $W_4$ have priority instead of message $W_3$, and hence the sum DoF for messages $\{W_1,W_2,W_3,W_4\}$ is increased by $1$. It follows that an extra term of $\frac{1}{4}\left(1-\left(1-p\right)^2\right)\left(1-p\right)^4$ is added to the average per user DoF. 

The TDMA optimality of the illustrated scheme follows from~\cite[Theorem $1$]{Maleki-Jafar-arXiv13} for each network realization.
\end{proof}

In Figure~\ref{fig:monenorm}, we plot the values of $\frac{\tau_p^{(1)}}{1-p}$, $\frac{\tau_p^{(2)}}{1-p}$, and $\frac{\tau_p^{(3)}}{1-p}$, and note that $\max \left\{\tau_p^{(1)},\tau_p^{(2)},\tau_p^{(3)}\right\}$ equals $\tau_p^{(1)}$ at high probabilities of erasure, and equals $\tau_p^{(2)}$ at low probabilities of erasure, and equals $\tau_p^{(3)}$ in a middle regime. 
\begin{figure}[htb]
\centering
\includegraphics[width=1\columnwidth]{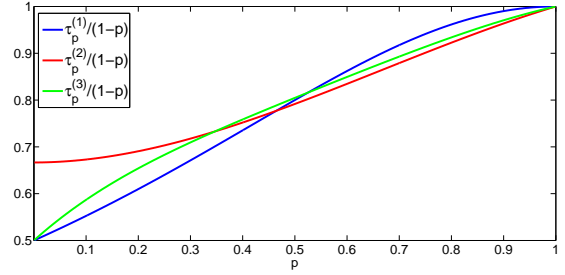}
\caption{The average per user DoF achieved through the strategies in Lemmas~\ref{lem:highp},~\ref{lem:lowp}, and~\ref{lem:middlep}, normalized by $(1-p)$.}
\label{fig:monenorm}
\end{figure} 

We now show that under the restriction to TDMA schemes, one of the message assignment strategies illustrated in Lemmas~\ref{lem:highp}, \ref{lem:lowp}, and \ref{lem:middlep} is optimal at any value of $p$. 
\begin{thm}\label{thm:tdma}
For a given erasure probability $p$, let $\tau_p^{(\text{TDMA})}$ be the average per user DoF under the restriction to orthogonal TDMA schemes, then at any value $0 \leq p \leq 1$ the following holds,
\begin{equation}\label{eq:tdma}
\tau_p^{(TDMA)}=\max \left\{\tau_p^{(1)},\tau_p^{(2)},\tau_p^{(3)}\right\},
\end{equation}
where $\tau_p^{(1)}$, $\tau_p^{(2)}$, and $\tau_p^{(3)}$ are given in~\eqref{eq:highp},~\eqref{eq:lowp}, and~\eqref{eq:middlep}, respectively.
\end{thm}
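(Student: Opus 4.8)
\section*{Proof proposal for Theorem~\ref{thm:tdma}}

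The plan is to prove \eqref{eq:tdma} by the two matching inequalities. The lower bound $\tau_p^{(TDMA)}\ge\max\{\tau_p^{(1)},\tau_p^{(2)},\tau_p^{(3)}\}$ is immediate, since the three strategies of Lemmas~\ref{lem:highp}, \ref{lem:lowp} and \ref{lem:middlep} are themselves orthogonal TDMA strategies achieving $\tau_p^{(1)}$, $\tau_p^{(2)}$ and $\tau_p^{(3)}$ respectively. The work is the converse: no TDMA strategy can do better on average.

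For the converse I would first reduce, via the argument of~\cite{ElGamal-Annapureddy-Veeravalli-arXiv12}, to \emph{irreducible} message assignments, so that every message sits on one of the two transmitters connected to its receiver. By Lemma~\ref{lem:equiv} such an assignment is then encoded by the sequence ${\bf N}^K$, and the structural observation preceding the four string forms shows that ${\bf N}^K$ breaks at its inactive transmitters ($N_i=0$) into successive \emph{blocks}: either the all-ones pattern (no inactive transmitter, form $S^{(1)}$), or a run of the form $(1^k,2,1^\ell,0)$, $k,\ell\ge0$, containing exactly one doubled transmitter and terminated by a single inactive transmitter (the forms $S^{(2)},S^{(3)},S^{(4)}$). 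Since $X_i\equiv0$ on an inactive transmitter and the message of receiver $i$ then sits on transmitter $i-1$, an inactive transmitter decouples the network to its two sides; hence for each erasure realization the TDMA sum DoF is additive over the blocks, and, the blocks having disjoint link sets, the same is true in expectation. Writing $\rho_p(k,\ell)$ for the per-user DoF of the periodic strategy built from the single block $(1^k,2,1^\ell,0)$, it follows that the per-user DoF of an arbitrary strategy is a length-weighted convex combination of the $\rho_p(k,\ell)$ and of $\tau_p^{(1)}$, so
\[
\tau_p^{(TDMA)}=\max\Big\{\tau_p^{(1)},\ \sup_{k,\ell\ge0}\rho_p(k,\ell)\Big\}.
\]

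It then remains to verify that no block beats all three candidates, i.e.\ $\rho_p(k,\ell)\le\max\{\tau_p^{(1)},\tau_p^{(2)},\tau_p^{(3)}\}$ for every $k,\ell\ge0$. The linear network~\eqref{eq:channel} is invariant under the relabelling $i\mapsto K+1-i$ of users (and the erasure/fading model is symmetric), which sends the block $(1^k,2,1^\ell,0)$ to $(1^\ell,2,1^k,0)$; hence $\rho_p(k,\ell)=\rho_p(\ell,k)$ and we may assume $k\le\ell$. For the extreme cases, $\rho_p(0,1)=\tau_p^{(2)}$ and $\rho_p(1,1)=\tau_p^{(3)}$ by Lemmas~\ref{lem:lowp} and \ref{lem:middlep}, $\rho_p(1,0)=\rho_p(0,1)=\tau_p^{(2)}$, and a direct Maleki--Jafar computation as in those lemmas gives $\rho_p(0,0)=\frac{1}{2}(1-p^2)$, which one checks never exceeds $\tau_p^{(1)}$. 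For all remaining blocks the point is that a long run of singly-loaded transmitters behaves like the infinite all-ones network: one shows $\rho_p(k,\ell)$ is monotone in each of $k,\ell$ once the other is at least $1$, with limit $\tau_p^{(1)}$ as $k$ (or $\ell$) tends to infinity, and that lengthening a run never pushes $\rho_p(k,\ell)$ above $\max\{\tau_p^{(1)},\rho_p(\min\{k,1\},\min\{\ell,1\})\}$. Combined with the explicit values, this pins $\sup_{k,\ell}\rho_p(k,\ell)=\max\{\tau_p^{(1)},\tau_p^{(2)},\tau_p^{(3)}\}$, which is \eqref{eq:tdma}.

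The main obstacle I anticipate is exactly this monotonicity step for $(1^k,2,1^\ell,0)$ with large run lengths: showing that inserting singly-loaded transmitters into a block can only move its per-user DoF toward the all-ones value $\tau_p^{(1)}$, never above the value of the corresponding short block. Making this precise needs either an exchange/coupling argument relating the erasure realizations of a long block to those of a shorter one, or a careful reading of the per-realization TDMA formula of~\cite[Theorem $1$]{Maleki-Jafar-arXiv13} to track how it scales under such insertions. Everything else is the cited reduction, the decomposition afforded by Lemma~\ref{lem:equiv}, the reversal symmetry, and the same finite erasure-event bookkeeping already performed in Lemmas~\ref{lem:highp}--\ref{lem:middlep}.
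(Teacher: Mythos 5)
Your setup — reduction to irreducible assignments, the encoding via ${\bf N}^K$, and the decomposition of the network at inactive transmitters into independent blocks of the form $(1^k,2,1^\ell,0)$ plus the all-ones case — matches the paper's converse exactly, and the lower bound and the boundary cases $\rho_p(0,1)=\tau_p^{(2)}$, $\rho_p(1,1)=\tau_p^{(3)}$ are fine. But the entire content of the converse lives in the step you defer: showing $\rho_p(k,\ell)\leq\max\{\tau_p^{(1)},\tau_p^{(2)},\tau_p^{(3)}\}$ for general run lengths. You propose to get this from a monotonicity-under-insertion property of $\rho_p(k,\ell)$, which you neither prove nor reduce to anything concrete; as you yourself flag, this is ``the main obstacle,'' and an exchange/coupling argument across erasure realizations of blocks of different lengths is not obviously available (the per-realization TDMA optimum of Maleki--Jafar is a combinatorial quantity, and inserting a singly-loaded transmitter changes the distribution of atomic subnetwork lengths in a way that does not trivially interpolate toward $\tau_p^{(1)}$). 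So there is a genuine gap at the decisive step.

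The paper closes this step without any monotonicity claim, by a user-partition argument that yields an explicit convex-combination bound per block. For the block $(2,1^n,0)$ with $n$ odd (illustrated on $(2,1,1,1,0)$, users $W_1,\dots,W_5$), it splits the users into a boundary group $\{W_1,W_2,W_5\}$ whose average sum DoF under the optimal TDMA scheme is exactly $3\tau_p^{(2)}$, and an interior group $\{W_3,W_4\}$ whose average sum DoF, including the priority-switch correction for odd atomic subnetworks, is at most $2\tau_p^{(1)}$; this gives the per-user bound $\frac{n-1}{n+2}\tau_p^{(1)}+\frac{3}{n+2}\tau_p^{(2)}$, with analogous bounds $\frac{n}{n+2}\tau_p^{(1)}+\frac{2}{n+2}\tau_p^{(2)}$ for $n$ even, the same bounds for $(1^n,2,0)$ by your reversal symmetry, and $\frac{n-2}{n+2}\tau_p^{(1)}+\frac{4}{n+2}\tau_p^{(3)}$ for $(1^k,2,1^\ell,0)$ with $k,\ell\geq1$. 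Each of these is a convex combination of the three candidates, so the supremum over blocks is attained at the short blocks and equals $\max\{\tau_p^{(1)},\tau_p^{(2)},\tau_p^{(3)}\}$ directly. If you want to complete your proof along your own lines, the practical fix is to replace the monotonicity claim with this kind of per-block bookkeeping: identify which users' contributions you can charge to $\tau_p^{(2)}$ or $\tau_p^{(3)}$ and bound the rest by $\tau_p^{(1)}$ apiece.
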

\begin{proof}
The inner bound follows from Lemmas~\ref{lem:highp},~\ref{lem:lowp}, and~\ref{lem:middlep}. In order to prove the converse, we need to consider all irreducible message assignment strategies where each message is assigned to a single transmitter. 
We know from Lemma~\ref{lem:highp} that the TDMA average per user DoF achieved through the strategy defined by the string of all ones having the form $S^{(1)}=(1)$ equals $\tau_p^{(1)}$, and hence the upper bound holds in this case. 

We now show that the TDMA average per user DoF achieved through strategies defined by strings of the form $S^{(2)}=\left(2,1,\ldots,1,0\right)$ is upper bounded by a convex combination of $\tau_p^{(1)}$ and $\tau_p^{(2)}$, and hence, is upper bounded by $\max \left\{\tau_p^{(1)},\tau_p^{(2)}\right\}$. The considered message assignment strategy splits each network into subnetworks consisting of a transmitter carrying two messages followed by a number of transmitters, each is carrying one message, and the last transmitter in the subnetwork carries no messages. We first consider the case where the number of transmitters carrying single messages is odd. We consider the simple scenario of the message assignment strategy defined by the string $(2,1,1,1,0)$, and then the proof will be clear for strategies defined by strings of the form $\left(2,1,1,\ldots,1,0\right)$ that have an arbitrary odd number of ones. In this case, it suffices to show that the average per user DoF in the first subnetwork is upper bounded by a convex combination of $\tau_p^{(1)}$ and $\tau_p^{(2)}$. The first subnetwork consists of the first five users; $W_1$ and $W_2$ can be transmitted through $X_1$. $W_3$, $W_4$ and $W_5$ can be transmitted through $X_2$, $X_3$, and $X_4$, respectively, and the transmit signal $X_5$ is inactive. 

We now explain the optimal TDMA scheme for the considered subnetwork. We first explain a simple scheme and then modify it to get the optimal scheme. Each of the messages $W_1$, $W_3$, and $W_5$ is delivered whenever the direct link between its carrying transmitter and its designated receiver is not erased. Message $W_2$ is delivered whenever message $W_1$ is not transmitted, and message $W_3$ is not causing interference at $Y_2$. Message $W_4$ is transmitted whenever $W_5$ is not causing interference at $Y_4$, and the transmission of $W_4$ through $X_3$ will not disrupt the communication of $W_3$. We now explain the modification; if there is an atomic subnetwork consisting of users $\{2,3,4\}$, then we switch the priority setting within this subnetwork, and messages $W_2$ and $W_4$ will be delivered instead of message $W_3$. The TDMA optimality of this scheme for each realization of the network follows from~\cite[Theorem $1$]{Maleki-Jafar-arXiv13}. Now, we note that the average sum DoF for messages $\{W_1,\ldots,W_5\}$ is equal to their sum DoF in the original scheme plus an extra term due to the modification. The average sum DoF for messages $\{W_1,W_2,W_5\}$ in the original scheme equals $3\tau_p^{(2)}$, and the sum of the average sum DoF for messages $\{W_3,W_4\}$ and the extra term is upper bounded by $2 \tau_p^{(1)}$. It follows that the average per user DoF is upper bounded by $\frac{2}{5} \tau_p^{(1)} + \frac{3}{5} \tau_p^{(2)}$. The proof can be generalized to show that the average TDMA per user DoF for message assignment strategies defined by strings of the form $S^{(2)}$ with an odd number of ones $n$, is upper bounded by $\frac{n-1}{n+2} \tau_p^{(1)} + \frac{3}{n+2} \tau_p^{(2)}$.

For message assignment strategies defined by a string of the form $S^{(2)}$ with an even number of ones $n$, it can be shown in a similar fashion as above that the TDMA average per user DoF is upper bounded by $\frac{n}{n+2} \tau_p^{(1)} + \frac{2}{n+2} \tau_p^{(2)}$. Also, for strategies defined by a string of the form $S^{(3)}=\left(1,1,\ldots,1,2,0\right)$ with a number of ones $n$, the TDMA average per user DoF is the same as that of a strategy defined by a string of the form $S^{(2)}$ with the same number of ones, and hence, is upper bounded by a convex combination of $\tau_p^{(1)}$ and $\tau_p^{(2)}$. Finally, for strategies defined by a string of the form $S^{(4)}=\left(1,1,\ldots,1,2,1,1,\ldots,1,0\right)$ with a number of ones $n$, it can be shown in a similar fashion as above that the average per user DoF is upper bounded by $\frac{n-2}{n+2}\tau_p^{(1)}+\frac{4}{n+2}\tau_p^{(3)}$.
\end{proof}

We now characterize the average per user DoF for the cell association problem by proving that TDMA schemes are optimal for any candidate message assignment strategy. In order to prove an information theoretic upper bound on the per user DoF for each network realization, we use Lemma $4$ from~\cite{ElGamal-Annapureddy-Veeravalli-arXiv12}, which we restate below. For any set of receiver indices ${\cal A} \subseteq [K]$, define $U_{\cal A}$ as the set of indices of transmitters that exclusively carry the messages for the receivers in ${\cal A}$, and the complement set is $\bar{U}_{\cal A}$. More precisely, $U_{\cal A} = [K]\backslash\cup_{i \notin {\cal A}} {\cal T}_i$.
\begin{lem} [\cite{ElGamal-Annapureddy-Veeravalli-arXiv12}] \label{lem:dofouterbound}
If there exists a set ${\cal A}\subseteq [K]$, a function $f_1$, and a function $f_2$ whose definition does not depend on the transmit power constraint $P$, and $f_1\left(Y_{\cal A},X_{U_{\cal A}}\right)=X_{\bar{U}_{\cal A}}+f_2(Z_{\cal A})$, then the sum DoF $\eta \leq |{\cal A}|$.
\end{lem}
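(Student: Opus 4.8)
The plan is to show that the $|{\cal A}|$ receivers indexed by ${\cal A}$, pooled into a single super‑receiver, can recover \emph{every} message $W_1,\dots,W_K$ once it is handed a genie signal whose entropy does not scale with $P$; since $Y_{\cal A}^n$ carries at most $|{\cal A}|$ degrees of freedom, this forces $\eta\le|{\cal A}|$. Fix a reliable length‑$n$ code at rates $(R_i)$, write ${\cal B}=[K]\setminus{\cal A}$, and let $S_i(t)=\sum_j H_{i,j}(t)X_j(t)$ be the noiseless signal at receiver $i$, so that $Y_i(t)=S_i(t)+Z_i(t)$. By the definition $U_{\cal A}=[K]\setminus\bigcup_{i\notin{\cal A}}{\cal T}_i$, each transmitter in $U_{\cal A}$ carries only messages of $W_{\cal A}$, hence $X_{U_{\cal A}}^n$ is a deterministic function of $W_{\cal A}$. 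Using independence of the messages and the chain rule,
\[ n\sum_{i=1}^K R_i = H(W_{[K]}) = I(W_{[K]};Y_{\cal A}^n) + H(W_{\cal A}\mid Y_{\cal A}^n) + H(W_{\cal B}\mid W_{\cal A},Y_{\cal A}^n). \]
The first term is at most $\sum_{i\in{\cal A}}h(Y_i^n)-h(Z_{\cal A}^n)\le|{\cal A}|\,n\log P+o(n\log P)$ by the average power constraint, and the second is $o(n)$ by Fano's inequality since each receiver $i\in{\cal A}$ decodes $W_i$ from $Y_i^n$. Hence it suffices to prove that the last term is $o(n\log P)$.

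To bound $H(W_{\cal B}\mid W_{\cal A},Y_{\cal A}^n)$, note that from $W_{\cal A}$ the super‑receiver reconstructs $X_{U_{\cal A}}^n$, and then the hypothesis, read off in each time slot, yields $f_1\!\left(Y_{\cal A}^n,X_{U_{\cal A}}^n\right)=X_{\bar{U}_{\cal A}}^n+f_2(Z_{\cal A}^n)$. Passing the perturbed transmit signals ($\hat X_j:=X_j$ for $j\in U_{\cal A}$ and $\hat X_j:=X_j+(f_2(Z_{\cal A}))_j$ for $j\in\bar{U}_{\cal A}$) back through the known channel produces, for \emph{every} $i$, a reconstruction
\[ \hat Y_i(t)=S_i(t)+\tilde Z_i(t), \qquad \tilde Z_i(t):=\sum_{j\in\bar{U}_{\cal A}}H_{i,j}(t)\,(f_2(Z_{\cal A}(t)))_j, \]
where $\tilde Z_i$ is a function of $Z_{\cal A}$ alone whose distribution — in particular its variance — does not depend on $P$. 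Now introduce the clean‑up genie $G_i^n:=Y_i^n-\hat Y_i^n=Z_i^n-\tilde Z_i^n$ for $i\in{\cal B}$. The triple $(W_{\cal A},Y_{\cal A}^n,G_{\cal B}^n)$ determines $Y_i^n=\hat Y_i^n+G_i^n$ for every $i\in{\cal B}$, so Fano applied to those receivers gives $H(W_{\cal B}\mid W_{\cal A},Y_{\cal A}^n,G_{\cal B}^n)=o(n)$, and therefore
\[ H(W_{\cal B}\mid W_{\cal A},Y_{\cal A}^n)= I(W_{\cal B};G_{\cal B}^n\mid W_{\cal A},Y_{\cal A}^n)+o(n). \]
Writing the residual term as a difference of conditional differential entropies and enlarging the conditioning in the subtracted term by $(W_{[K]},Z_{\cal A}^n)$ — which makes $\tilde Z_{\cal B}^n$ deterministic while leaving $Z_{\cal B}^n$ independent — gives
\[ I(W_{\cal B};G_{\cal B}^n\mid W_{\cal A},Y_{\cal A}^n)\ \le\ h(G_{\cal B}^n)-h(Z_{\cal B}^n)\ =\ O(n), \]
since each $G_i$ is i.i.d.\ over time with $\mathrm{Var}(G_i)=1+\mathrm{Var}(\tilde Z_i)$, a $P$‑independent constant (assuming, as one may, that $f_2$ has finite second moment). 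Thus $H(W_{\cal B}\mid W_{\cal A},Y_{\cal A}^n)=O(n)=o(n\log P)$, and combining the three pieces yields $n\sum_i R_i\le|{\cal A}|\,n\log P+o(n\log P)$; dividing by $n\log P$ and letting $n\to\infty$ and then $P\to\infty$ gives $\eta\le|{\cal A}|$.

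The only genuine difficulty is the step just sketched: showing that replacing the true receiver noise $Z_i$ by the reconstruction noise $\tilde Z_i$ is harmless on the DoF scale. This is precisely where the hypothesis that $f_2$ does not depend on $P$ enters — it makes $\tilde Z_i$, hence the clean‑up genie $G_i$, have entropy (and variance) bounded uniformly in $P$, so $G_{\cal B}^n$ can be supplied to the super‑receiver at zero DoF cost. Some care is needed to condition the subtracted entropy on enough side information ($(W_{[K]},Z_{\cal A}^n)$) so as to sidestep the degeneracy caused by $S_i^n$ being a deterministic function of the messages; beyond that, everything reduces to routine uses of Fano's inequality and the single‑receiver bound $h(Y_i^n)\le n\log P+O(n)$.
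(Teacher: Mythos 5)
Your proof is correct and follows essentially the same route as the proof of this lemma in the cited reference \cite{ElGamal-Annapureddy-Veeravalli-arXiv12} (the present paper only restates the lemma without proof): the receivers in ${\cal A}$ decode $W_{\cal A}$, reconstruct $X_{U_{\cal A}}$, invoke $f_1$ to recover $X_{\bar{U}_{\cal A}}$ up to a $P$-independent noise perturbation, and then decode the remaining messages at only $o(n\log P)$ cost, which caps the sum DoF at $|{\cal A}|$. Your ``clean-up genie'' $G_{\cal B}^n$ is a clean formalization of the standard ``reconstruct the remaining outputs within bounded noise distortion'' step, and the finite-second-moment caveat on $f_2$ that you flag is harmless in all the applications in this paper, where $f_2$ is linear in $Z_{\cal A}$.
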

\begin{thm}\label{thm:mone}
The average per user DoF for the cell association problem is given by,

\begin{equation}\label{eq:tauone}
\tau_p\left(M=1\right)=\tau_p^{(TDMA)}=\max \left\{\tau_p^{(1)},\tau_p^{(2)},\tau_p^{(3)}\right\},
\end{equation}
where $\tau_p^{(1)}$, $\tau_p^{(2)}$, and $\tau_p^{(3)}$ are given in~\eqref{eq:highp},~\eqref{eq:lowp}, and~\eqref{eq:middlep}, respectively.
\end{thm}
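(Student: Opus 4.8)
The achievability direction $\tau_p(M=1)\ge\tau_p^{(\mathrm{TDMA})}$ is immediate, since TDMA is a special case of general coding, so everything is in the converse: for every irreducible message assignment with $|{\cal T}_i|=1$ and every coding scheme, the average per user DoF is at most $\max\{\tau_p^{(1)},\tau_p^{(2)},\tau_p^{(3)}\}$. My plan is to prove the stronger per-realization statement: for each fixed pattern of erased links, the information-theoretic sum DoF $\eta$ of the network equals the sum DoF that the TDMA scheme of Lemma~\ref{lem:highp}, \ref{lem:lowp}, or \ref{lem:middlep} attains on that pattern. Granting this, taking the expectation over the erasure pattern gives $\eta_p(K,1)=\eta_p^{(\mathrm{TDMA})}(K)$ for each fixed strategy; dividing by $K$, letting $K\to\infty$, and maximizing over strategies reduces the theorem to Theorem~\ref{thm:tdma}, which already evaluates $\tau_p^{(\mathrm{TDMA})}$ as $\max\{\tau_p^{(1)},\tau_p^{(2)},\tau_p^{(3)}\}$ (the reduction there to the strings $S^{(1)},\dots,S^{(4)}$ and to convex combinations of the three $\tau_p^{(i)}$ is exactly what is needed here).

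For the per-realization equality, I would fix the erasure pattern and split the network at every inactive transmitter and every link whose erasure isolates a block of consecutive users, i.e.\ into its atomic subnetworks (messages whose own direct link is erased contribute $0$ DoF and are removed first). Since distinct subnetworks share no links, both $\eta$ and the TDMA DoF are additive over them, so it suffices to treat one atomic subnetwork of, say, $k$ users, in which by definition every relevant coefficient $H_{r,t}$ with $r\in\{t,t+1\}$ and both inside the block is nonzero. On such a block the schemes of Lemmas~\ref{lem:highp}--\ref{lem:middlep}, together with their modifications, achieve $\lceil k/2\rceil$; for the matching converse I would invoke Lemma~\ref{lem:dofouterbound} with ${\cal A}$ a suitably chosen set of $\lceil k/2\rceil$ receivers of the block --- roughly every other receiver, with a defect of one at the position of a two-message transmitter if one is present. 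With such a choice the equations $\{Y_r=\sum_t H_{r,t}X_t+Z_r : r\in{\cal A}\}$, read as equations in the signals $X_{\bar U_{\cal A}}$ with $X_{U_{\cal A}}$ and $Z_{\cal A}$ treated as known, can be solved one signal at a time using the nonzero internal coefficients; this produces $f_1(Y_{\cal A},X_{U_{\cal A}})=X_{\bar U_{\cal A}}+f_2(Z_{\cal A})$ with $f_2$ depending only on the (power-independent) channel coefficients, and Lemma~\ref{lem:dofouterbound} then yields $\eta\le|{\cal A}|=\lceil k/2\rceil$, matching the achievable value.

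The main obstacle is the bookkeeping around a transmitter that carries two messages, which is unavoidable for the strings $S^{(2)},S^{(3)},S^{(4)}$: one scalar signal then serves two adjacent receivers, so which transmitters fall in $U_{\cal A}$, and which signals must therefore be reconstructed and from which equations, depends on where the two-message transmitter and the block endpoints sit within the repeating pattern. One has to enumerate the finitely many configurations of an atomic subnetwork and check, in each, that there is a choice of ${\cal A}$ for which the resulting linear system is solvable \emph{and} $|{\cal A}|=\lceil k/2\rceil$ --- in particular that the ``extra-DoF'' atomic blocks exploited by the modifications in Lemmas~\ref{lem:highp} and \ref{lem:middlep} (odd-length blocks whose end users have even indices) are exactly those where the only admissible ${\cal A}$ has size $\lceil k/2\rceil$ rather than $\lfloor k/2\rfloor$. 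Once these cases are dispatched the per-realization equality holds, and averaging over erasure patterns is word-for-word the computation already carried out in the proofs of Lemmas~\ref{lem:highp}--\ref{lem:middlep} and Theorem~\ref{thm:tdma}, so the theorem follows.
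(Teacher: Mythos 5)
Your proposal is correct and follows essentially the same route as the paper: decompose each realization into atomic subnetworks, apply Lemma~\ref{lem:dofouterbound} with ${\cal A}$ chosen as every other receiver (shifted by one at a two-message transmitter) to get a per-subnetwork bound of $\left\lfloor\frac{n+1}{2}\right\rfloor$ matching the TDMA achievability, and then reduce the averaging and the maximization over strategies to Theorem~\ref{thm:tdma}. The case bookkeeping you flag around the strings $S^{(1)}$--$S^{(4)}$ is exactly what the paper carries out.
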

\begin{proof}
In order to prove the statement, we need to show that $\tau_p(M=1) \leq \tau_p^{(TDMA)}$; we do so by using Lemma~\ref{lem:dofouterbound} to show that for any irreducible message assignment strategy satisfying the cell association constraint, and any network realization, the asymptotic per user DoF is given by that achieved through the optimal TDMA scheme.

Consider message assignment strategies defined by strings having one of the forms $S^{(1)}=(1)$, $S^{(2)}=\left(2,1,1,\ldots,1,0\right)$, and $S^{(3)}=\left(1,1,\ldots,1,2,0\right)$. We view each network realization as a series of atomic subnetworks, and show that for each atomic subnetwork, the sum DoF is achieved by the optimal TDMA scheme. For an atomic subnetwork consisting of a number of users $n$, we note that $\left\lfloor\frac{n+1}{2}\right\rfloor$ users are active in the optimal TDMA scheme; we now show in this case using Lemma~\ref{lem:dofouterbound} that the sum DoF for users in the subnetwork is bounded by $\left\lfloor\frac{n+1}{2}\right\rfloor$. Let the users in the atomic subnetwork have the indices $\{i,i+1,\ldots,i+n-1\}$, then we use Lemma~\ref{lem:dofouterbound} with the set ${\cal A}=\left\{i+2j: j\in\left\{0,1,2,\ldots,\left\lfloor\frac{n-1}{2}\right\rfloor\right\}\right\}$, except the cases of message assignment strategies defined by strings having one of the forms $S^{(1)}=(1)$ and $S^{(3)}=\left(1,1,\ldots,1,2,0\right)$ with an even number of ones, where we use the set ${\cal A}=\left\{i+1+2j: j\in\left\{0,1,2,\ldots,\frac{n-2}{2}\right\}\right\}$. We now note that each transmitter that carries a message for a user in the atomic subnetwork and has an index in $\bar{U}_{\cal A}$, is connected to a receiver in ${\cal A}$, and this receiver is connected to one more transmitter with an index in $U_{\cal A}$, and hence, the missing transmit signals $X_{\bar{U}_{\cal A}}$ can be recovered from $Y_{\cal A}-Z_{\cal A}$ and $X_{U_{\cal A}}$. The condition in the statement of Lemma~\ref{lem:dofouterbound} is then satisfied; allowing us to prove that the sum DoF for users in the atomic subnetwork is upper bounded by $|{\cal A}|=\left\lfloor\frac{n+1}{2}\right\rfloor$.

The proof is similar for message assignment strategies defined by strings that have the form $S^{(4)}=\left\{1,1,\ldots,1,2,1,1,\ldots,1,0\right\}$. However, there is a difference in selecting the set ${\cal A}$ for atomic subnetworks consisting of users with indices $\{i,i+1,\ldots,i+x,i+x+1,\ldots,i+n-1\}$, where $1 \leq x \leq n-2$, and messages $W_{i+x}$ and $W_{i+x+1}$ are both available at transmitter $i+x$. In this case, we apply Lemma~\ref{lem:dofouterbound} with the set ${\cal A}$ defined as above, but including indices $\{i+x,i+x+1\}$ and excluding indices $\{i+x-1,i+x+2\}$. It can be seen that the condition in Lemma~\ref{lem:dofouterbound} will be satisfied in this case, and the proved upper bound on the sum DoF for each atomic subnetwork, is achievable through TDMA.
\end{proof}
\begin{figure}[htb]
\centering
\includegraphics[width=1\columnwidth]{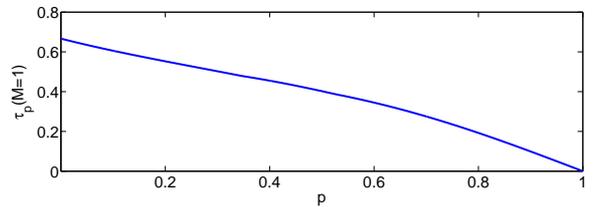}
\caption{The average per user DoF for the cell association problem}
\label{fig:monemax}
\end{figure} 

In Figure~\ref{fig:monemax}, we plot $\tau_p(M=1)$ at each value of $p$. The result of Theorem~\ref{thm:mone} implies that the message assignment strategies considered in Lemmas~\ref{lem:highp},~\ref{lem:lowp},~\ref{lem:middlep} are optimal at high, low, and middle values of the erasure probability $p$, respectively. We note that in densely connected networks at a low probability of erasrue, the \emph{interference-aware} message assignment strategy in Figure~\ref{fig:lowp} is optimal; through this assignment, the maximum number of interference free communication links can be created for the case of no erasures. On the other hand, the linear nature of the channel connectivity does not affect the choice of optimal message assignment at high probability of ersure. As the effect of interference diminishes at high probability of erasure, assigning each message to a unique transmitter, as in the strategy in Figure~\ref{fig:highp}, becomes the only criterion of optimality. At middle values of $p$, the message assignment strategy in Figure~\ref{fig:middlep} is optimal; in this assignment, the network is split into four user subnetworks. In the first subnetwork, the assignment is optimal as the maximum number of interference free communication links can be created for the two events where there is an atomic subnetwork consisting of users $\{1,2,3\}$ or users $\{2,3,4\}$.

\section{Coordinated Multi-Point Transmission}\label{sec:comp}
We have shown that there is no message assignment strategy for the cell association problem that is optimal for all values of $p$. We show in this section that this statement is true even for the case where each message can be available at more than one transmitter ($M>1$). Recall that for a given value of $M$, we say that a message assignment strategy is universally optimal if it can be used to achieve $\tau_p(M)$ for all values of $p$.
\begin{thm}\label{thm:comp}
For any value of the cooperation constraint $M \in {\bf Z}^+$, there does not exist a universally optimal message assignment strategy
\end{thm}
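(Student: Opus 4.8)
The case $M=1$ already follows from Theorem~\ref{thm:mone}: $\tau_p(M=1)$ equals $\tau_p^{(1)}$, $\tau_p^{(2)}$, and $\tau_p^{(3)}$ on the high, low, and middle ranges of $p$, and (Figure~\ref{fig:monemax}) these are realized by three different strategies, none of which is optimal throughout. So assume $M\geq 2$ and suppose, for contradiction, that some strategy $\mathcal{S}$ is universally optimal. The plan is to play the two extreme erasure regimes against each other: at $p=0$, optimality forces $\mathcal{S}$ into the heavily-cooperating, interference-avoiding structure of~\cite{ElGamal-Annapureddy-Veeravalli-arXiv12}, while as $p\to 1$ optimality forces $\mathcal{S}$ to hedge almost every message across both transmitters connected to its receiver; these demands are incompatible for $K$ large.

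First I would record two facts about the regime $p\to1$, valid for every $M$. Upper bound: receiver $i$ hears only transmitters $i-1$ and $i$, so whenever $H_{i,i-1}=H_{i,i}=0$ the signal $Y_i$ is pure noise and $W_i$ carries no DoF; hence $\eta_p(K,M)\leq\sum_i\Pr[\text{a link into }i\text{ survives}]\leq K(1-p^2)$, i.e. $\tau_p(M)\leq1-p^2$. Lower bound for $M\geq2$: take the strategy $\mathcal{T}_{1,K}=\{1\}$, $\mathcal{T}_{i,K}=\{i-1,i\}$ for $i\geq2$ (admissible since $|\mathcal{T}_{i,K}|=2\leq M$). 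Each message is delivered whenever one of its two incident links survives, unless a single surviving transmitter is demanded by two messages with no alternative link available — an event of probability $O((1-p)^2)$ per user; with overwhelming probability the erased network breaks into atomic pieces of bounded size on each of which~\cite[Theorem~1]{Maleki-Jafar-arXiv13} pins down the DoF, giving $\tau_p(M)\geq(1-p^2)-O((1-p)^2)$. Thus the optimal leading behaviour as $p\to1$ is $\tau_p(M)=1-p^2$ up to an $O((1-p)^2)$ correction.

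Next I would extract the obstruction coming from $p=0$. Call a message $W_i$ \emph{pinned} in the $K$-user network of $\mathcal{S}$ if $|\mathcal{T}_{i,K}\cap\{i-1,i\}|\leq1$, and write $\beta_K(\mathcal{S})$ for the fraction of pinned messages. The key claim is that any $\mathcal{S}$ optimal at $p=0$ has $\liminf_K\beta_K(\mathcal{S})\geq\delta_M$ for a constant $\delta_M>0$. Indeed $\tau_0(M)=\tfrac{2M}{2M+1}$ by~\cite{ElGamal-Annapureddy-Veeravalli-arXiv12}, whose converse characterizes the $p=0$-optimal assignments, and their block-of-$(2M+1)$ structure forces a positive density of pinned messages. (For $M=2$ this is visible directly: if the density of pinned messages were $o(1)$, then almost every $W_i$ would sit at both transmitters $i-1$ and $i$, and applying Lemma~\ref{lem:dofouterbound} to the erasure-free channel with $\mathcal{A}=\{i:i\not\equiv0\bmod3\}$ — for which $U_{\mathcal{A}}$ is the set of transmitters $\equiv1\bmod3$ and every $X_j$, $j\in\bar U_{\mathcal{A}}$, is solved for from the single equation $Y_i=H_{i,i-1}X_{i-1}+H_{i,i}X_i+Z_i$, $i\in\mathcal{A}$ — would give $\tau_0(\mathcal{S})\leq\tfrac23<\tfrac45$, a contradiction.) Granting the claim, a pinned message is deliverable only when its single incident link survives, so for $\mathcal{S}$ with $\liminf_K\beta_K(\mathcal{S})\geq\beta$ one gets $\tau_p(\mathcal{S})\leq(1-\beta)(1-p^2)+\beta(1-p)=(1-p^2)-\beta\,p(1-p)$. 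For our $p=0$-optimal $\mathcal{S}$ this reads $\tau_p(\mathcal{S})\leq(1-p^2)-\delta_M\,p(1-p)$, whereas $\tau_p(M)\geq(1-p^2)-C(1-p)^2$ by the second paragraph; hence $\tau_p(M)-\tau_p(\mathcal{S})\geq(1-p)\bigl(\delta_M\,p-C(1-p)\bigr)>0$ for every $p$ sufficiently close to $1$. So $\mathcal{S}$ is strictly suboptimal there, contradicting universal optimality, and therefore no universally optimal strategy exists.

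The step I expect to be delicate is the structural claim in the third paragraph in full generality. For $M\geq3$ an optimal $p=0$ assignment legitimately assigns some $W_i$ to transmitters $j$ with $|i-j|\geq2$ (used only to zero-force interference), and such long-range assignments pull transmitters into $\bar U_{\mathcal{A}}$ and wreck the clean one-equation-per-unknown recovery used above for $M=2$. I would handle this by first reducing, without loss of generality, to an \emph{irreducible} message assignment in the sense of~\cite{ElGamal-Annapureddy-Veeravalli-arXiv12} (a universally optimal strategy has a universally optimal irreducible representative), and then importing that paper's characterization of the $p=0$-optimal irreducible assignments, from which the lower bound on $\beta_K$ is read off; the residual work is to perturb the set $\mathcal{A}$ locally around the $o(K)$ pinned positions so that Lemma~\ref{lem:dofouterbound} still applies with $|\mathcal{A}|=\tfrac{2K}{3}+o(K)$. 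The remaining ingredients — the $1-p^2$ outer bound, the $\{i-1,i\}$ inner bound, and the final comparison — are routine.
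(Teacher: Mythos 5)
Your argument is essentially the paper's proof run in the contrapositive direction: the paper fixes a strategy that is optimal as $p\to 1$, uses the limit $\lim_{p\to 1}\tau_p/(1-p)=\frac{1}{K}\sum_i|{\cal T}_i\cap\{i-1,i\}|$ to force almost every message onto both transmitters $\{i-1,i\}$, and then invokes Lemma~\ref{lem:dofouterbound} with a periodic $\bar{\cal A}$ of density $\frac{1}{2M-1}$ to cap such a strategy at $\frac{2M-2}{2M-1}<\frac{2M}{2M+1}$ for $p=0$ --- which is precisely your key structural claim, proved by precisely your $M=2$ mechanism. In particular, the step you flag as delicate for $M\geq 3$ is resolved in the paper not by characterizing the $p=0$-optimal assignments of~\cite{ElGamal-Annapureddy-Veeravalli-arXiv12}, but by the same Lemma~\ref{lem:dofouterbound} bound you use for $M=2$, taking $\bar{\cal A}=\{i: i=(2M-1)(j-1)+M,\ j\in{\bf Z}^+\}$.
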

\begin{proof}
The proof follows from Theorem~\ref{thm:mone} for the case where $M=1$. We show that for any value of $M>1$,  any message assignment strategy that enables the achievability of $\tau_p(M)$ at high probabilities of erasure, is not optimal for the case of no erasures, i.e., cannot be used to achieve $\tau_p(M)$ for $p=0$. For any message assignment strategy, consider the value of $\lim_{p \rightarrow 1} \frac{\tau_p(M)}{1-p}$ and note this value equals the average number of transmitters in a transmit set that can be connected to the designated receiver. More precisely,
\begin{equation}\label{eq:cond}
\lim_{p \rightarrow 1} \frac{\tau_p(M)}{1-p}=\frac{\sum_{i=1}^{K}|{\cal T}_i \cap \{i-1,i\}|}{K},
\end{equation}
where ${\cal T}_i$ in~\eqref{eq:cond} corresponds to an optimal message assignment strategy at high probabilities of erasure. It follows that there exists a value $0 < \bar{p} < 1$ such that for any message assignment strategy that enables the achievability of $\tau_p(M)$ for $p \geq \bar{p}$, almost all messages are assigned to the two transmitters that can be connected to the designated receiver, i.e., if we let $S_K=\left\{i: {\cal T}_{i,K} = \left\{i-1,i\right\}\right\}$, then $\lim_{K \rightarrow \infty} \frac{|S_K|}{K} = 1$. 

We recall from~\cite{ElGamal-Annapureddy-Veeravalli-arXiv12} that for the case of no erasures, the average per user DoF equals $\frac{2M}{2M+1}$. We also note that following the same footsteps as in the proof of~\cite[Theorem $7$]{ElGamal-Annapureddy-Veeravalli-arXiv12}, we can show that for any message assignment strategy such that $\lim_{K \rightarrow \infty} \frac{|S_K|}{K} = 1$, the per user DoF for the case of no erasures is upper bounded by $\frac{2M-2}{2M-1}$; we do so by using Lemma~\ref{lem:dofouterbound} for each $K-$user channel with the set ${\cal A}$ defined such that the complement set $\bar{\cal A}=\{i:i\in[K], i=(2M-1)(j-1)+M, j\in{\bf Z}^+\}$.
\end{proof}

The condition of optimality identified in the proof of Theorem~\ref{thm:comp} for message assignment strategies at high probabilities of erasure suggest a new role for cooperation in dynamic interference networks. The availability of a message at more than one transmitter may not only be used to cancel its interference at other receivers, but to increase the chances of connecting the message to its designated receiver. This new role leads to three effects at high erasure probability.  The achieved DoF in the considered linear interference network becomes larger than that of $K$ parallel channels, in particular, $\lim_{p \rightarrow 1} \frac{\tau_p(M>1)}{1-p} = 2$. Secondly, as the effect of interference diminishes at high probabilities of erasures, all messages can simply be assigned to the two transmitters that may be connected to their designated receiver, and a simple interference avoidance scheme can be used in each network realization, as we show below in the scheme of Theorem~\ref{thm:mtwoic}. It follows that channel state information is no longer needed at transmitters, and only information about the slow changes in the network topology is needed to achieve the optimal average DoF. Finally, unlike the optimal scheme of~\cite[Theorem $4$]{ElGamal-Annapureddy-Veeravalli-arXiv12} for the case of no erasures, where some transmitters are always inactive, achieving the optimal DoF at high probabilities of erasure requires all transmitters to be used in at least one network realization.

We now restrict our attention to the case where $M=2$. Here, each message can be available at two transmitters, and transmitted jointly by both of them. We study two message assignment strategies that are optimal in the limits of $p \rightarrow 0$ and $p \rightarrow 1$, and derive inner bounds on the average per user DoF $\tau_p(M=2)$ based on the considered strategies. In~\cite{ElGamal-Annapureddy-Veeravalli-arXiv12}, the message assignment of Figure~\ref{fig:mtwojone} was shown to be DoF optimal for the case of no erasures ($p=0$). The network is split into subnetworks, each with five consecutive users. The last transmitter of each subnetwork is deactivated to eliminate inter-subnetwork interfeerence. In the first subnetwork, message $W_3$ is not transmitted, and each other message is received without interference at its designated receiver. Note that the transmit beams for messages $W_1$ and $W_5$ contributing to the transmit signals $X_2$ and $X_5$, respectively, are designed to cancel the interference at receivers $Y_2$ and $Y_4$, respectively. An analog scheme is used in each following subnetwork. The value of $\tau_p(M=2)$ is thus $\frac{4}{5}$ for the case where $p=0$. In order to prove the following result, we extend the message assignment of Figure~\ref{fig:mtwojone} to consider the possible presence of block erasures. 

\begin{thm}\label{thm:mtwoicaware}
For $M=2$, the following average per user DoF is achievable,
\begin{equation}\label{eq:mtwoicaware}
\tau_p(M=2) \geq \frac{2}{5} \left(1-p\right)\left(2 + A.p\right),
\end{equation}
where,
\begin{equation}
A = p + 1 - \left(\left(1-p\right)^2 \left(1-p\left(1-p\right)\right)\right) - \frac{1}{2} p(1-p),
\end{equation}
and is asymptotically optimal as $p \rightarrow 0$.
\end{thm}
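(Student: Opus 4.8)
\noindent\emph{Proof plan.} I would establish the inner bound by constructing a transmission strategy that collapses to the zero-forcing scheme of Figure~\ref{fig:mtwojone} (shown in~\cite{ElGamal-Annapureddy-Veeravalli-arXiv12} to be DoF-optimal at $p=0$) whenever no link is erased, and would then evaluate its average per user DoF block by block. Keep the $p=0$ message assignment: partition $[K]$ into consecutive blocks of five users, and inside the block $\{5k-4,\dots,5k\}$ use the assignment of Figure~\ref{fig:mtwojone}, in which transmitter $5k$ is a \emph{separator} (silent in the baseline, which decouples the block from its successor), message $W_{5k-2}$ is \emph{sacrificed}, and the remaining four messages are delivered interference-free, the two outer ones each using a second transmitter to zero-force its leakage onto an in-block receiver. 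In the baseline the blocks do not interact, and the only cross-block effect is that switching a separator on touches $O(1)$ neighbouring blocks; hence $\tau_p(M=2)$ equals $\frac{1}{5}$ times the expected number of messages successfully delivered in a generic block, and it suffices to lower bound that expectation.

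First I would treat the baseline inside a block: for each erasure pattern run the same scheme with the zero-forcing beams recomputed from the surviving links. A message carried on a single direct link is delivered with probability $1-p$; each of the two outer messages is delivered precisely when its direct link survives and, in addition, either the link that would carry its leakage is erased (so no cancellation is needed) or the cancellation link survives, an event of probability $(1-p)(1-p(1-p))$. Summing over the four active messages gives a baseline contribution of $\frac{1}{5}\left(2(1-p)+2(1-p)(1-p(1-p))\right)$ per user, which already produces the prefactor $\frac{2}{5}(1-p)$ and part of the bracket in~\eqref{eq:mtwoicaware}.

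Next I would add the \emph{opportunistic} deliveries that supply the rest of the term $Ap$: enumerate the erasure patterns in which one extra degree of freedom can be salvaged in a block by deviating from the baseline. These are, first, patterns in which an active (outer) message is already lost because one of its links is erased, so that a transmitter is freed and the sacrificed message $W_{5k-2}$ can be routed to receiver $5k-2$ without colliding with any still-active receiver; and second, patterns in which $H_{5k+1,5k}=0$, so the separator transmitter $5k$ may be switched on without leaking into the next block, delivering $W_{5k}$ cleanly from it and $W_{5k-2}$ from the transmitters it vacates. For each such family I would compute its probability while applying inclusion--exclusion to avoid double counting when several opportunities coexist; the factor $(1-p)^2(1-p(1-p))$ in $A$ is the probability of the link configuration under which the principal opportunity is unavailable, and the $\frac{1}{2}p(1-p)$ term is the contribution of a single-erasure configuration weighted by an every-other-block averaging factor, exactly as the analogous $\frac{1}{2}$ arises in the proof of Lemma~\ref{lem:highp}. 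Collecting the baseline and opportunistic contributions yields $\tau_p(M=2)\ge\frac{2}{5}(1-p)(2+Ap)$.

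Asymptotic optimality as $p\to0$ then follows immediately: at $p=0$ one has $A=0$, so the right-hand side of~\eqref{eq:mtwoicaware} equals $\frac{4}{5}$, which is the value of $\tau_0(M=2)$ recalled above; since the bound is continuous in $p$ it meets the optimal value at $p=0$. I expect the opportunistic step to be the main obstacle: one must identify \emph{all} erasure patterns admitting an extra delivery, verify that the zero-forcing submatrices used in each of them stay full rank for almost every choice of the non-zero coefficients (which holds by the continuous-distribution assumption), and perform the inclusion--exclusion so that the case probabilities collapse to precisely the stated $A$. Matching that bookkeeping to the closed form in~\eqref{eq:mtwoicaware} is the substantive part; the baseline count and the $p=0$ optimality remark are routine.
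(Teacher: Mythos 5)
Your overall architecture (five-user blocks, a baseline scheme plus opportunistic deliveries, and $p\to0$ optimality inherited from the known value $\tau_0(2)=\frac{4}{5}$) matches the paper's, but two of your concrete steps fail. First, your baseline count is wrong. You charge each ``outer'' message a factor $(1-p(1-p))$ for survival of its cancellation link, giving a baseline of $\frac{2}{5}(1-p)(2-p(1-p))$ per user. In the paper's scheme each of the four non-sacrificed messages in a block achieves the full $(1-p)$: the cancellation link for, say, $W_1$ is exactly the direct link $H_{2,2}$ of the victim receiver, so when it is erased receiver $2$ is not served through $X_2$ at all (it can only be served through $X_1$, which requires $H_{1,1}=0$, i.e., $W_1$ silent), and the uncancelled leakage is harmless. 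Your baseline is therefore short by a term of order $p(1-p)$, whereas all of the paper's corrections beyond $\frac{4}{5}(1-p)$ are $O(p^2)$; the opportunistic families you list (rerouting the sacrificed message, activating the separator) do not recover the lost deliveries of the outer messages themselves, so the scheme as you describe it cannot reach $\frac{2}{5}(1-p)(2+Ap)$. Second, your reading of the two factors in $A$ is incorrect: $(1-p)^2(1-p(1-p))$ is the probability that $W_2$ \emph{does} cause interference at $Y_3$ (namely $H_{2,2}\neq 0$, $H_{3,2}\neq 0$, and $W_2$ cannot be moved to $X_1$), and $\frac{1}{2}p(1-p)$ is an inclusion--exclusion correction for the sacrificed message being deliverable through \emph{either} of its two transmitters --- not an every-other-block averaging as in Lemma~\ref{lem:highp}.

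The missing idea is that the paper does not keep the $p=0$ assignment of Figure~\ref{fig:mtwojone}; it modifies it (Figure~\ref{fig:mtwojonenew}) so that, within the block $\{1,\dots,5\}$, ${\cal T}_1=\{1,2\}$, ${\cal T}_5=\{3,4\}$, and ${\cal T}_i=\{i-1,i\}$ for $i=2,3,4$, leaving $X_5$ carrying no message in any realization (so your second opportunistic family, switching on the separator transmitter, has nothing to transmit). With this assignment the extra terms are: $W_2$ rerouted to $X_1$ (and symmetrically $W_4$ to $X_4$) when the primary link fails and the alternative transmitter is idle, each worth $p^2(1-p)$; and the sacrificed $W_3$ delivered through $X_3$ or through $X_2$, each with probability $f(p)=p(1-p)\bigl(1-(1-p)^2(1-p(1-p))\bigr)$, minus the overlap $p^2(1-p)^2$. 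Summing $\frac{4}{5}(1-p)+\frac{2}{5}p^2(1-p)+\frac{1}{5}\bigl(2f(p)-p^2(1-p)^2\bigr)$ gives exactly \eqref{eq:mtwoicaware}. Your $p\to 0$ optimality argument is fine.
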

\begin{proof}
We know from~\cite{ElGamal-Annapureddy-Veeravalli-arXiv12} that $\lim_{p \rightarrow 0} \tau_p(2)=\frac{4}{5}$, and hence, it suffices to show that the inner bound in~\eqref{eq:mtwoicaware} is valid. For each $i \in [K]$. message $W_i$ is assigned as follows,

\vspace{5 mm}
${\cal T}_{i}=
\begin{cases}
\{i,i+1\}, \quad &\text{ if } i \equiv 1 \text{ mod } 5\\
\{i-1,i-2\}, \quad &\text{ if } i \equiv 0 \text{ mod } 5\\
\{i-1,i\}, \quad &\text{ otherwise },
\end{cases}$
\vspace{5 mm}

We illustrate this message assignment in Figure~\ref{fig:mtwojonenew}. We note that the transmit signals $\{X_i: i \equiv 0 \text{ mod } 5\}$ are inactive, and hence, we split the network into five user subnetworks with no interference between successive subnetworks. We explain the transmission scheme in the first subnetwork and note that a similar scheme applies to each following subnetwork. In the proposed transmission scheme, any receiver is either inactive or receives its desired message without interference, and any transmitter will not transmit more than one message for any network realization. It follows that 1 DoF is achieved for each message that is transmitted.

 Messages $W_1$, $W_2$, $W_4$, and $W_5$ are transmitted through $X_1$, $X_2$, $X_3$, and $X_4$, respectively, whenever the coefficients $H_{1,1}\neq 0$, $H_{2,2}\neq 0$, $H_{4,3}\neq 0$, and $H_{5,4}\neq 0$, respectively. Note that the transmit beam for message $W_1$ contributing to $X_2$ can be designed to cancel its interference at $Y_2$. Similarly, the interference caused by $W_5$ at $Y_4$ can be cancelled through $X_3$. It follows that $(1-p)$ DoF is achieved for each of $\{W_1,W_2,W_4,W_5\}$, and hence, $\tau_p(2) \geq \frac{4}{5} (1-p)$. Also, message $W_2$ is transmitted through $X_1$ if it cannot be transmitted through $X_2$ and message $W_1$ is not transmitted through $X_1$. More precisely, message $W_2$ is transmitted through $X_1$ if $H_{2,2}=0$ and $H_{2,1} \neq 0$ and $H_{1,1}=0$, thereby achieving an extra $p^2 (1-p)$ DoF. Similarly, message $W_4$ can be transmitted through $X_4$ if $H_{4,3}=0$ and $H_{4,4} \neq 0$ and $H_{5,4} = 0$. It follows that,
\begin{equation}\label{eq:stepone}
\tau_p(2) \geq \frac{4}{5} (1-p) + \frac{2}{5} p^2 (1-p)
\end{equation}

Finally, message $W_3$ will be transmitted through $X_3$ if message $W_4$ is not transmitted through $X_3$, and message $W_2$ is not causing interference at $Y_3$. Message $W_4$ is not transmitted through $X_3$ whenever the coefficient $H_{4,3}=0$, and message $W_2$ does not cause interference at $Y_3$ whenever the coefficient $H_{2,2} = 0$ or the coefficient $H_{3,2} = 0$ or $W_2$ can be transmitted through $X_1$. More precisely, message $W_3$ is transmitted through $X_3$ if and only if all the following is true:
\begin{itemize}
\item $H_{3,3} \neq 0$, and $H_{4,3} = 0$
\item $H_{2,2} = 0$, or $H_{3,2} = 0$, or it is the case that $H_{1,1} = 0$ and $H_{2,1} \neq 0$.
\end{itemize} 
It follows that $f(p)$ DoF is achieved for message $W_3$, where,
\begin{equation}
f(p)=p\left(1-p\right)\left(1 - \left(\left(1-p\right)^2 \left(1-p\left(1-p\right)\right)\right)\right).
\end{equation}
Similary, $W_3$ can be transmitted through $X_2$ if and only if message $W_2$ is not transmitted through $X_2$ and message $W_4$ is either not transmitted or can be transmitted without causing interference at $Y_3$, i.e., if and only if all the following is true:
\begin{itemize}
\item $H_{3,2} \neq 0$, and $H_{2,2} = 0$
\item $H_{4,3} = 0$, or $H_{3,3} = 0$, or it is the case that $H_{5,4}=0$ and $H_{4,4} \neq 0$. 
\end{itemize} 
The above conditions are satisfied with probability $f(p)$. Since we have counted twice the event that $H_{3,3} \neq 0$ and $H_{4,3} =0$ and $H_{3,2} \neq 0$ and $H_{2,2} = 0$, it follows that $2f(p)-p^2(1-p)^2$ DoF is achieved for $W_3$. Summing the DoF achieved for other messages in~\eqref{eq:stepone}, we conclude that,
\begin{equation}
\tau_p(2) \geq \frac{4}{5} (1-p) + \frac{2}{5} p^2 (1-p) + \frac{1}{5} \left(2f(p)-p^2(1-p)^2 \right),
\end{equation}
which is the same inequality as in~\eqref{eq:mtwoicaware}.
\end{proof}

\begin{figure}
  \centering
  
\subfloat[]{\label{fig:mtwojone}\includegraphics[height=0.16\textwidth]{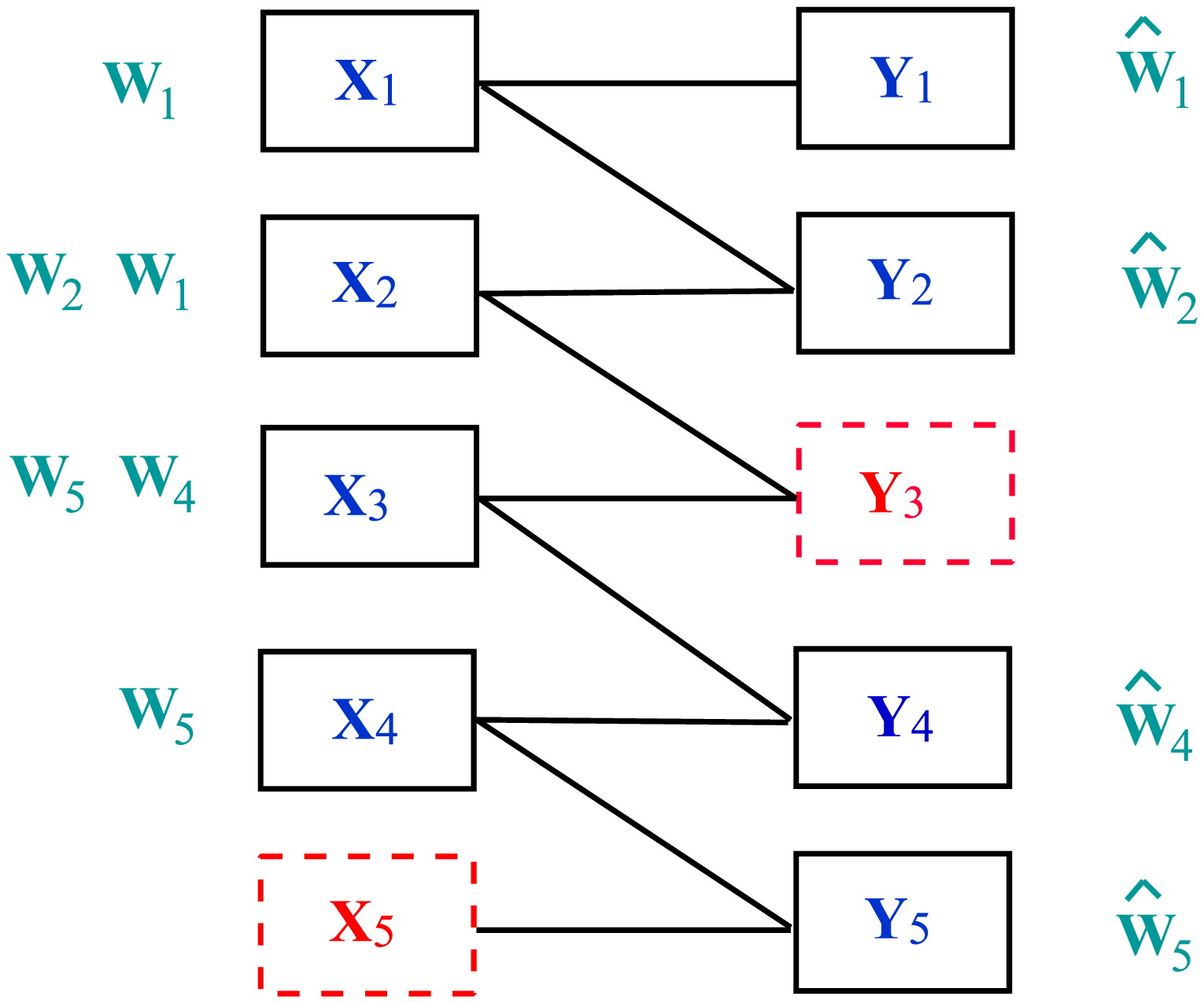}}                
\quad\quad\quad\quad\subfloat[]{\label{fig:mtwojonenew}\includegraphics[width=0.21\textwidth]{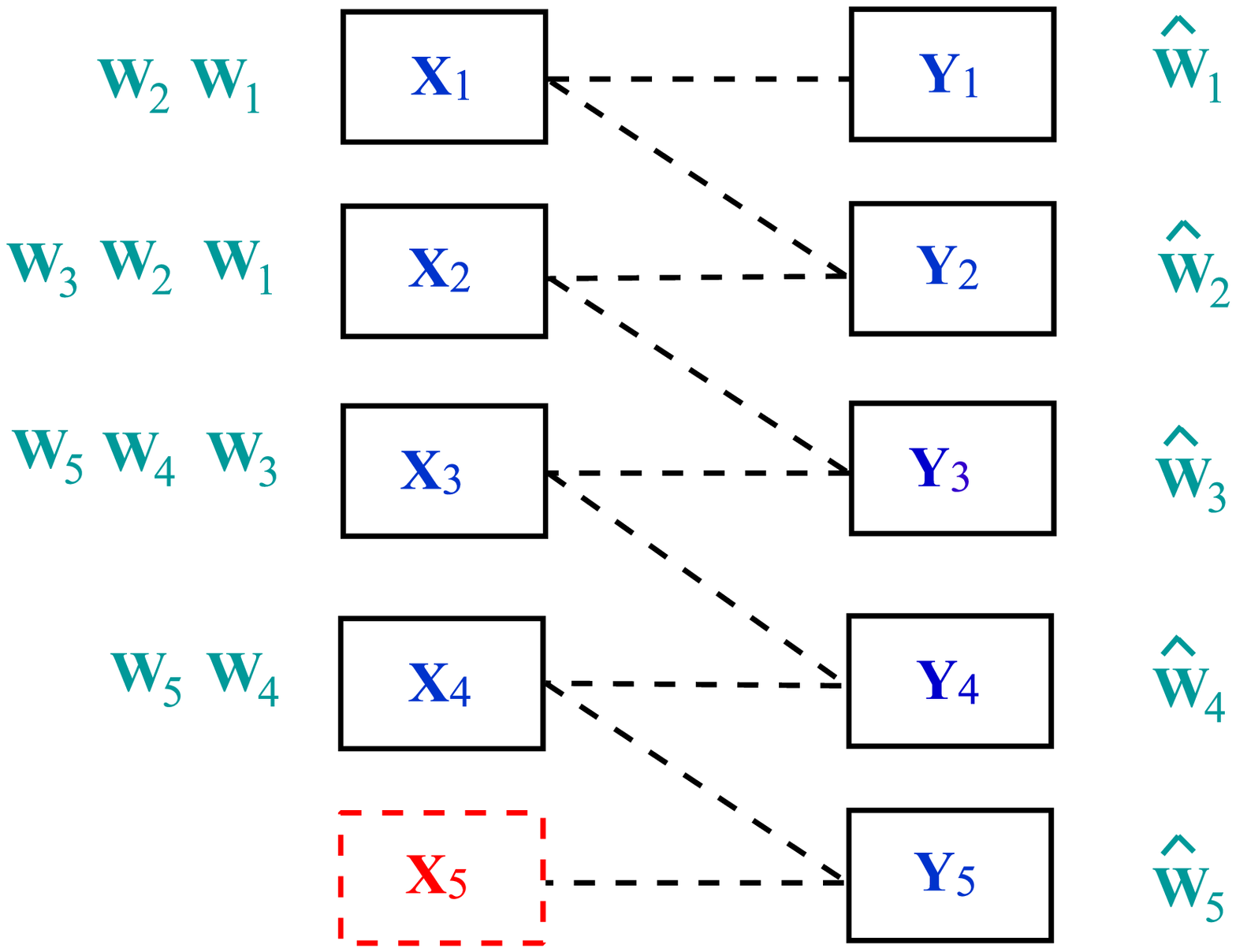}}
  \caption{The message assignment in ($a$) is optimal for a linear network with no erasures ($p=0$). We extend this message assignment in ($b$) to consider non-zero erasure probabilities. In both figures, the red dashed boxes correspond to inactive signals.}
  \label{fig:mtwoicaware}
\end{figure}

Although the scheme of Theorem~\ref{thm:mtwoicaware} is optimal for the case of no erasures ($p=0$), we know from Theorem~\ref{thm:comp} that better schemes exist at high erasure probabilities. Since in each five user subnet in the scheme of Theorem~\ref{thm:mtwoicaware}, only three users have their messages assigned to the two transmitters that can be connected to their receivers, and two users have only one of these transmitters carrying their messages, we get the asymptotic limit of  $\frac{8}{5}$ for the achieved average per user DoF normalized by $(1-p)$ as $p \rightarrow 1$. This leads us to consider an alternative message assignment where the two transmitters carrying each message $i$ are the two transmitters $\left\{i-1,i\right\}$ that can be connected to its designated receiver. Such assignment would lead the ratio $\frac{\tau_p(2)}{1-p} \rightarrow 2$ as $p \rightarrow 1$. In the following theorem, we analyze a transmission scheme based on this assignment.

\begin{thm}\label{thm:mtwoic}
For $M=2$, the following average per user DoF is achievable,
\begin{equation}\label{eq:mtwoic}
\tau_p(M=2) \geq \frac{1}{3} \left(1-p\right)\left(1 + \left(1-p\right)^3 + B.p\right),
\end{equation}
where,
\begin{eqnarray}
B &=& 3+\left(1+\left(1-p\right)^3\right)\left(1-\left(1-p\right)^2+p\left(1-p\right)^3\right)\nonumber\\ &+& p \left(1+(1-p)^2 \right),
\end{eqnarray}
and, 
\begin{equation}\label{eq:mtwoiclimit}
\lim_{p \rightarrow 1} \frac{\tau_p(2)}{1-p} = 2.
\end{equation}
\end{thm}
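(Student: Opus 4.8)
The plan is to prove \eqref{eq:mtwoic} by analyzing an explicit orthogonal (interference-avoidance) scheme for the message assignment that places every message at exactly the two transmitters that can reach its receiver: $\mathcal{T}_i=\{i-1,i\}$ for $i\geq 2$ and $\mathcal{T}_1=\{1\}$, so that transmitter $j$ carries precisely $\{W_j,W_{j+1}\}$. Because the scheme will only ever deliver a message free of interference and will never let a transmitter carry two simultaneously active messages, every delivered message contributes exactly $1$ DoF and only the block-erasure pattern (not the realized coefficient values) is needed — this is the point anticipated in the remarks after Theorem~\ref{thm:comp}. As in the proofs of Lemmas~\ref{lem:highp},~\ref{lem:lowp} and~\ref{lem:middlep}, I would partition the users into consecutive three-user blocks, decouple neighbouring blocks through the rule governing which of its two transmitters carries each boundary message, and fix a priority order among the three messages of a block; I would first describe the ``base'' version (no priority switching), in which the higher-priority messages are delivered whenever a prescribed short list of their incident links survives, and show that summing the resulting per-block success probabilities contributes $\frac{1}{3}(1-p)\bigl(1+(1-p)^3\bigr)$ to the average per user DoF, which also matches the no-erasure value $\frac{2}{3}$ of~\cite{ElGamal-Annapureddy-Veeravalli-arXiv12} at $p=0$.

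Next I would recover the correction $\frac{1}{3}(1-p)Bp$ by modifying the base scheme inside the atomic configurations of a block where the erased links actually permit more to be delivered — exactly the mechanism used for the ``$W_3$ through $X_3$ / $W_3$ through $X_2$'' analysis in Theorem~\ref{thm:mtwoicaware} and for the priority switch in Lemma~\ref{lem:middlep}. Concretely, whenever the erasure pattern in a block lets the lowest-priority message be routed through its second transmitter, or lets all three block messages be delivered at once, the scheme does so; the factors appearing in $B$ — the quantity $1-(1-p)^2+p(1-p)^3$, the term $p\bigl(1+(1-p)^2\bigr)$, and the prefactor $1+(1-p)^3$ — are the probabilities of these enabling events, and since several of them overlap, $B$ is written as a sum rather than a product: the analysis has to subtract the multiply-counted erasure patterns, exactly as the ``$-p^2(1-p)^2$'' subtraction appears in the proof of Theorem~\ref{thm:mtwoicaware}. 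Assembling the base term and this correction gives \eqref{eq:mtwoic}.

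Finally, for \eqref{eq:mtwoiclimit}: dividing \eqref{eq:mtwoic} by $1-p$ gives $\frac{\tau_p(2)}{1-p}\geq \frac{1}{3}\bigl(1+(1-p)^3+Bp\bigr)$, and since $(1-p)^3\to 0$ and $B\to 5$ as $p\to 1$, the right-hand side tends to $2$, so $\liminf_{p\to 1}\frac{\tau_p(2)}{1-p}\geq 2$; the matching upper bound is immediate from \eqref{eq:cond} in the proof of Theorem~\ref{thm:comp}, which identifies $\lim_{p\to 1}\frac{\tau_p(M)}{1-p}$ with the average of $|\mathcal{T}_i\cap\{i-1,i\}|$ over $i$, and this average is at most $2$. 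Hence the limit equals $2$. I expect the main obstacle to be neither the scheme design nor the limit but the combinatorial bookkeeping in the middle step: correctly enumerating, for a three-user block with assignment $\mathcal{T}_i=\{i-1,i\}$, which erasure patterns admit an extra interference-free delivery under some allowed priority choice, while not double counting patterns serviced by more than one of the priority-switch rules — this is what produces the somewhat intricate closed form for $B$.
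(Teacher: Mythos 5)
Your plan follows essentially the same route as the paper's proof: the paper uses exactly the assignment $\mathcal{T}_i=\{i-1,i\}$, an interference-avoidance scheme in which every delivered message gets $1$ DoF so that only the erasure pattern matters, a per-residue-class (mod $3$) accounting whose three contributions $d_0,d_1,d_2$ sum to $(1-p)\left(1+(1-p)^3+Bp\right)$ with the base term $1+(1-p)^3$ and correction $Bp$ split exactly as you predict, and for \eqref{eq:mtwoiclimit} the elementary upper bound that each message achieves at most $1-p^2$ DoF (equivalent to your appeal to \eqref{eq:cond}). The only caveat is that the paper's delivery rules are not block-decoupled --- the conditions for a residue-$2$ user reference links in neighbouring blocks --- but since those conditions involve disjoint sets of links the probabilities still factor, which is precisely the bookkeeping you correctly identified as the crux.
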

\begin{proof}
For any message assignment, no message can be transmitted if the links from both transmitters carrying the message to its designated receiver are absent, and hence, the average DoF achieved for each message is at most $1-p^2$. It follows that $\lim_{p\rightarrow 1} \frac{\tau_p(2)}{1-p} \leq \lim_{p \rightarrow 1} \frac{(1-p)(1+p)}{1-p} = 2$. We then need only to prove that the inner bound in~\eqref{eq:mtwoic} is valid. In the achieving scheme, each message is assigned to the two transmitters that may be connected to its designated receiver, i.e., ${\cal T}_i = \{i-1,i\}, \forall i\in[K]$. Also, in each network realization, each transmitter will transmit at most one message and any transmitted message will be received at its designated receiver without interference. It follows that 1 DoF is achieved for any message that is transmitted, and hence, the probability of transmission is the same as the average DoF achieved for each message.

Each message $W_i$ such that $i \equiv 0 \text{ mod } 3$ is transmitted through $X_{i-1}$ whenever $H_{i,i-1} \neq 0$, and is transmitted through $X_i$ whenever $H_{i,i-1}=0$ and $H_{i,i} \neq 0$. It follows that $d_0$ DoF is achieved for each of these messages, where,
\begin{equation}
d_0 = (1-p)(1+p).
\end{equation}

We now consider messages $W_i$ such that $i \equiv 1 \text{ mod } 3$. Any such message is transmitted through $X_{i-1}$ whenever $H_{i,i-1} \neq 0$ and $H_{i-1,i-1} = 0$. We note that whenever the channel coefficient $H_{i-1,i-1} \neq 0$, message $W_i$ cannot be transmitted through $X_{i-1}$ as the transmission of $W_i$ through $X_{i-1}$ in this case will prevent $W_{i-1}$ from being transmitted due to either interference at $Y_{i-1}$ or sharing the transmitter $X_{i-1}$. It follows that $d_1^{(1)} = p(1-p)$ DoF is achieved for transmission of $W_i$ through $X_{i-1}$. Also, message $W_i$ is transmitted through $X_i$ whenever it is not transmitted through $X_{i-1}$ and $H_{i,i} \neq 0$ and either $H_{i,i-1}=0$ or message $W_{i-1}$ is transmitted through $X_{i-2}$. More precisely, $W_i$ is transmitted through $X_i$ whenever all the following is true:
$H_{i,i} \neq 0$, and either $H_{i,i-1} =0$ or it is the case that $H_{i,i-1}\neq 0$ and $H_{i-1,i-1} \neq 0$ and $H_{i-1,i-2} \neq 0$.
It follows that $d_1^{(2)}=p\left(1-p\right)+\left(1-p\right)^4$ is achieved for transmission of $W_i$ through $X_{i}$, and hence, $d_1$ DoF is achieved for each message $W_i$ such that $i \equiv 1 \text{ mod } 3$, where,
\begin{equation}
d_1= d_1^{(1)} + d_1^{(2)} = 2p\left(1-p\right) + \left(1-p\right)^4.
\end{equation}

We now consider messages $W_i$ such that $i \equiv 2 \text{ mod } 3$. Any such message is transmitted through $X_{i-1}$ whenever all the following is true: 

\begin{itemize}
\item $H_{i,i-1} \neq 0$.
\item Either $H_{i-1,i-1}=0,$ or $W_{i-1}$ is not transmitted.
\item $W_{i+1}$ is not causing interference at $Y_i$.
\end{itemize} 

The first condition is satisfied with probability $(1-p)$. In order to compute the probability of satisfying the second condition, we note that $W_{i-1}$ is not transmitted for the case when $H_{i-1,i-1} \neq 0$ only if $W_{i-2}$ is transmitted through $X_{i-2}$ and causing interference at $Y_{i-1}$, i.e., only if $H_{i-2,i-3}=0$ and $H_{i-2,i-2} \neq 0$ and $H_{i-1,i-2} \neq 0$. It follows that the second condition is satisfied with probability $p + p(1-p)^3$. The third condition is not satisfied only if $H_{i,i} \neq 0$ and $H_{i+1,i} \neq 0$, and hence, will be satisfied with probability at least $1-\left(1-p\right)^2$. Moreover, even if if $H_{i,i} \neq 0$ and $H_{i+1,i} \neq 0$, the third condition can be satisfied if message $W_{i+1}$ can be transmitted through $X_{i+1}$ without causing interference at $Y_{i+2}$, i.e., if $H_{i+1,i+1} \neq 0$ and $H_{i+2,i+1}=0$. It follows that the third condition will be satisfied with probability $1-(1-p)^2+p(1-p)^3$, and $d_2^{(1)}$ DoF is achieved by transmission of $W_i$ through $X_{i-1}$, where,
\begin{equation}
d_2^{(1)} = p\left(1-p\right)\left(1+\left(1-p\right)^3\right)\left(1-\left(1-p\right)^2+p\left(1-p\right)^3\right).
\end{equation}
 
Message $W_i$ such that $i \equiv 2 \text{ mod } 3$ is transmitted through $X_i$ whenever $H_{i,i} \neq 0$, and $H_{i+1,i} = 0$, and either $H_{i,i-1} = 0$ or $W_{i-1}$ is transmitted through $X_{i-2}$. It follows that $d_2^{(2)}$ DoF is achieved by transmission of $W_i$ through $X_i$, where,
\begin{eqnarray}
d_2^{(2)} &=& p\left(1-p\right)\left(p + d_1^{(1)} \left(1-p\right) \right)
\\&=& p^2\left(1-p\right) \left(1+(1-p)^2 \right),
\end{eqnarray}
and hence, $d_2= d_2^{(1)} + d_2^{(2)}$ DoF is achieved for each message $W_i$ such that $i \equiv 2 \text{ mod } 3$. We finally get,
\begin{equation}
\tau_p(2) \geq \frac{d_0 + d_1 + d_2}{3},
\end{equation}   
which is the same inequality as in~\eqref{eq:mtwoic}.
\end{proof}

\begin{figure}
  \centering
\subfloat[]{\label{fig:mtwodof}\includegraphics[height=0.155\textwidth]{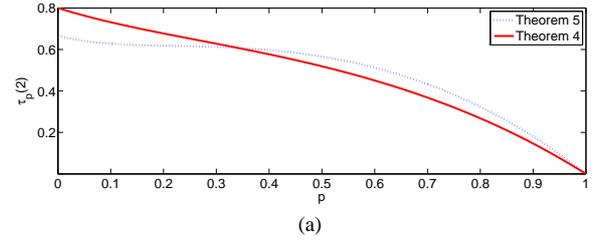}}                
\quad\quad\quad\quad\subfloat[]{\label{fig:mtwonormdof}\includegraphics[width=0.5\textwidth]{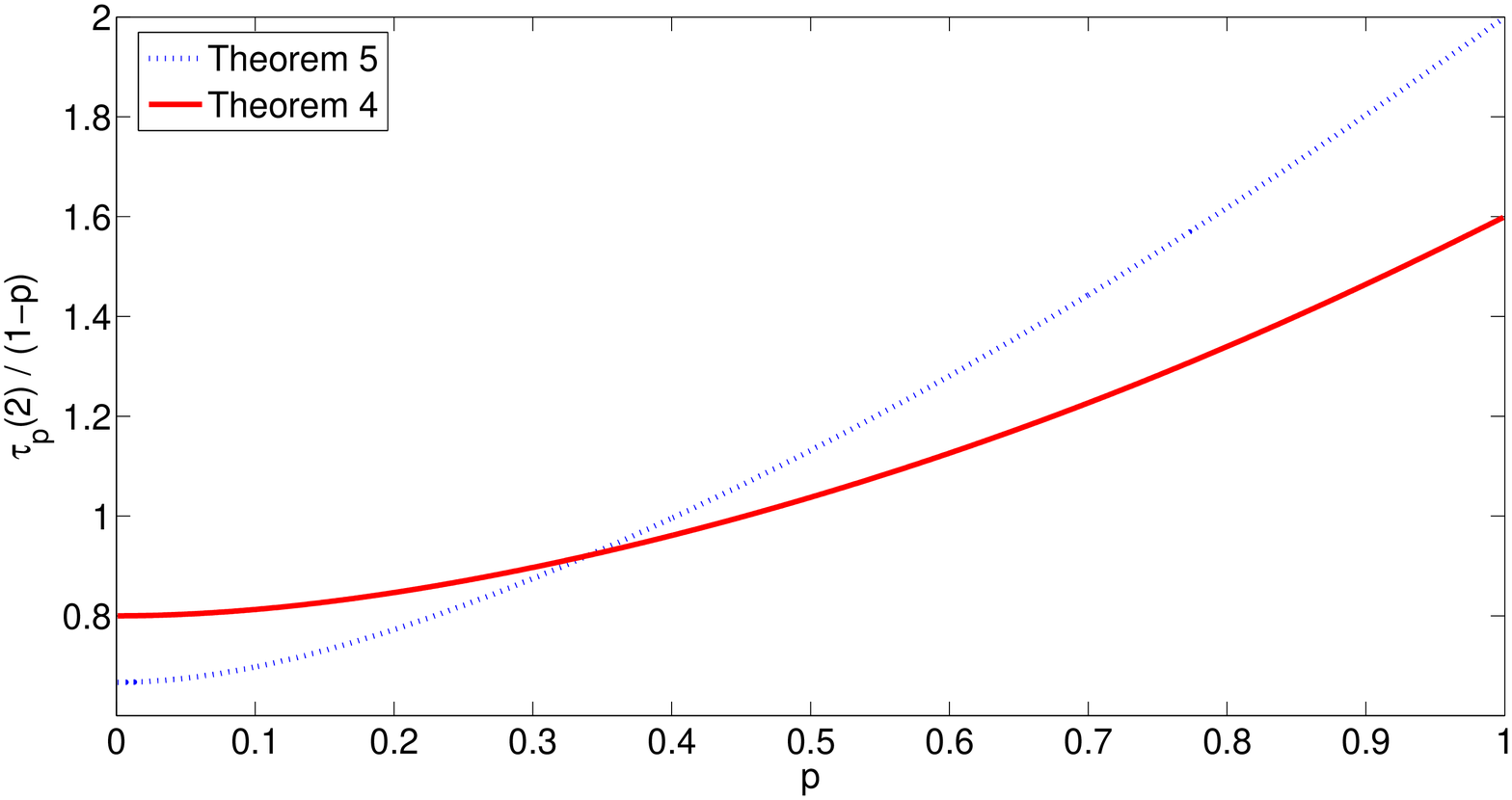}}
  \caption{Achieved inner bounds in Theorems $4$ and $5$. In $(a)$ we plot the achieved per user DoF. In $(b)$, we plot the achieved per user DoF normalized by $(1-p)$.}
  \label{fig:mtwo}
\end{figure}
We plot the inner bounds of~\eqref{eq:mtwoicaware} and~\eqref{eq:mtwoic} in Figure~\ref{fig:mtwo}. We note that below a threshold erasure probability $p \approx 0.34$, the scheme of Theorem~\ref{thm:mtwoicaware} is better, and hence  is proposed to be used in this case. For higher probabilities of erasure, the scheme of Theorem~\ref{thm:mtwoic} should be used. 
It is worth mentioning that we also studied a scheme based on the message assignment ${\cal T}_i = \{i,i+1\}, \forall i\in[K-1]$, that is introduced in~\cite{Lapidoth-Shamai-Wigger-ISIT07}. However, we did not include it here as it does not increase the maximum of the bounds derived in~\eqref{eq:mtwoicaware} and~\eqref{eq:mtwoic} at any value of $p$. Finally, although the considered channel model allows for using the interference alignment scheme of~\cite{Cadambe-IA} over multiple channel realizations (symbol extensions), all the proposed schemes require only coding over one channel realization because of the sparsity of the linear network.

\section{Conclusion}\label{sec:conclusion}
We considered the problem of assigning messages to transmitters in a linear interference network with link erasure probability $p$, under a constraint that limits the number of transmitters $M$ at which each message can be available. For the case where $M=1$, we identified the optimal message assignment strategies at different values of $p$, and characterized the average per user DoF $\tau_p(M=1)$. For general values of $M\geq 1$, we proved that there is no message assignment strategy that is optimal for all values of $p$. We finally introduced message assignment strategies for the case where $M=2$, and derived inner bounds on $\tau_p(M=2)$ that are asymptotically optimal as $p \rightarrow 0$ and as $p \rightarrow 1$.

\bibliographystyle{IEEEtran}

\end{document}